\pgfplotsset{compat=1.6}
\newtheorem{theorem}{Theorem}
\newtheorem{lemma}{Lemma}
\newtheorem{proposition}{Proposition}
\newcommand{\argmin}{\operatornamewithlimits{argmin}}
\newcommand{\argmax}{\operatornamewithlimits{argmax}}
\definecolor{violet}{rgb}{0.7,0.2,0.6}
\begin{document}

\begin{center}
\large \bf  An RKHS model for variable selection\\ in functional regression
\end{center}
\normalsize

\begin{center}
  Jos\'e R. Berrendero, Beatriz Bueno-Larraz, Antonio Cuevas \\
  Departamento de Matem\'aticas\\
  Universidad Aut\'onoma de Madrid, Spain
\end{center}

\begin{abstract}
\footnotesize {A mathematical model for variable selection in functional regression models with scalar response is proposed. By ``variable selection'' we mean a procedure to replace the whole trajectories of the functional explanatory variables with their values at a finite number of carefully selected instants (or ``impact points''). The basic idea of our approach is to use the Reproducing Kernel Hilbert Space (RKHS) associated with the underlying process, instead of the more usual $L^2[0,1]$ space, in the definition of the linear model. This turns out to be especially suitable for variable selection purposes, since the finite-dimensional linear model based on the selected ``impact points'' can be seen as a particular case of the RKHS-based linear functional model. In this framework, we address the consistent estimation of the optimal design of impact points and we check, via simulations and real data examples, the performance of the proposed method.}
\end{abstract}

\small \noindent {\bf Keywords:} feature selection, functional regression, impact points, variable selection.


\section{Introduction: statement of the problem and motivation}


\noindent \textit{The problem under study: variable selection in functional regression}

The study of regression models is clearly among the leading topics in statistics. In particular, these models play a central role in the theory of statistics with functional data, often called Functional Data Analysis (FDA); see \cite{cuevas2014} for an overview on FDA. 

Throughout this paper, we will consider ``functional data'' consisting of independent $X_1=X_1(t),\ldots, X_n=X_n(t)$ observations (trajectories) drawn from a second-order ($L^2$) stochastic process $X=X(t),\, t\in [0,1]$, with continuous trajectories, continuous mean function $m=m(t)$ and covariance function $K(s,t)$. All the involved  random variables are supposed to be defined in a common probability space $(\Omega, {\mathcal A},{\mathbb P})$.

We are interested on regression models with scalar response, of type $Y_i=g(X_i)+\varepsilon_i$, where $g$ is a real function defined on a suitable space ${\mathcal X}$ where the trajectories of our process
are supposed to live and $\varepsilon_i$ are independent errors (and also independent from the $X_i$) with mean zero and common variance $\sigma^2$. 

More specifically, we are concerned with variable selection issues; see, \cite{fan2010selective}, \citet[Sec. 1]{berrendero2016} for additional information and references. Basically, a variable selection  functional method is an automatic procedure that takes a function $\{x(t),\ t\in[0,1]\}$ to a finite-dimensional vector $(x(t_1),\ldots,x(t_{p}))$. The overall idea for variable selection is to choose the variables $x(t_i^*)$   (or, equivalently, the``impact points'' $t_1,\ldots,t_{p}\in[0,1]$; see \cite{kneip2015}), in an ``optimal way'' so that the original functional problem (regression, classification, clustering,...) is replaced with the corresponding multivariate version  of this problem, based on the selected variables.  In the regression problem, this would amount to replace the functional model $Y_i=g(X_i)+\varepsilon_i$ by a finite dimensional version of type
$Y_i=\phi(X(t_1),\ldots,X(t_{p}))+e_i$. Nevertheless, note that still the problem is of a functional nature, since the variable selection process of the $t_i$ involves in principle the full data trajectories.   

\


\noindent \textit{Some motivation. The classical linear $L^2$-model. Its drawbacks for variable selection purposes}

It is quite natural to assume that the explanatory functional variables $X_i=X_i(t)$ are members of the space $L^2[0,1]$, endowed with the usual inner product $\langle x_1,x_2\rangle_2=\int_0^1x_1(t)x_2(t)dt$, for $x_1,x_2\in L^2[0,1]$. In this setting, the most popular choice for $g$ is, by far, a linear (or affine) operator from $L^2[0,1]$ to ${\mathbb R}$ which leads to a model of type
\begin{equation}
Y_i=\alpha_0+\langle X_i,\beta\rangle_2+\varepsilon_i,\ i=1,\ldots,n, \label{Eq:L2}
\end{equation}
where $X=X(t)$ is the explanatory functional variable, $\alpha_0\in{\mathbb R}$ is the intercept constant and $\beta\in L^2[0,1]$ denotes the slope function. As in the standard multivariate regression model, the aim here is to estimate $\alpha_0$ and $\beta$ in order to be able to make accurate predictions of the response variable $Y$. 

The corresponding theory is outlined in several places; see, e.g., the article \cite{cardot2010} or the books \cite{ferraty2006} and \cite{horvath2012}. The Hilbert structure of the $L^2[0,1]$ space allows us to keep ourselves as close as possible to the usual least square framework in multivariate regression; for example, the projection $P(x)$ of an element $x$ on a subspace $H$ is characterized by the orthogonality condition $\langle x-P(x), a\rangle_2=0$, for all $a\in H$. However, other crucial differences with the finite-dimensional case (mostly associated with the non-invertibility of the covariance operator of the process $X(t)$) makes the functional $L^2$ regression theory far from trivial. Most of these difficulties are intrinsic to the infinite-dimensional nature of the data, so that they cannot be overcome by just replacing $L^2[0,1]$ with another function space. However, when it comes to variable selection applied to linear regression, it would be useful to have the finite dimensional linear model (based on the selected variables) 
\begin{equation}\label{Eq:L2VS}
Y_i=\alpha_0+\sum_{j=1}^{p}\beta_jX_i(t_j)+\varepsilon_i,\ i=1,\ldots,n
\end{equation}
as a particular case of our general model. Notice that \eqref{Eq:L2VS} cannot be established in the $L^2$ framework, since a transformation of type  $x\in L^2[0,1]\mapsto \sum_{i=1}^p\beta_i x(t_j)$ is not a linear continuous functional in $L^2$. In heuristic terms, one would need to look for the regression function $\beta$ in a suitable space, for which a ``finite-dimensional'' model such as \eqref{Eq:L2VS} could make sense. More precisely, we will replace the $L^2[0,1]$ space in which $\beta$ is assumed to live with the Reproducing Kernel Hilbert Space (RKHS), ${\mathcal H}(K)$, associated with $K$. 

The assumed membership of $\beta$ to ${\mathcal H}(K)$ entails some additional restrictions of regularity  on the slope function $\beta$ (when compared to the simple assumption $\beta\in L^2[0,1]$). In any case, some restrictions on $\beta$ appear also in different ways when the classical  $L^2$-model \eqref{Eq:L2} is considered. The reason is that  the space $L^2[0,1]$ is in fact too large from several points of view. Hence, in spite of the advantages commented above, one typically uses penalization or projection methods to exclude extremely rough solutions in the estimation of $\beta$.

Our proposal here, as presented in the next section, aims at reconciling two targets: first, we look for a functional linear model, wide enough to include finite-dimensional versions, such as \eqref{Eq:L2VS}, as particular cases. Second, we would like to achieve such goal with a minimal change in the ``parameter space'' where $\beta$ lives.

\

\noindent \textit{Some related literature}

A quite general RKHS-based approach to the problem of dimension reduction in functional regression been proposed by \cite{hsing2009}. These authors follow the inverse regression methodology to deal with a model of type $Y=\ell (\xi_1,...\xi_d)+\epsilon$
where $\ell$ is a link function and the $\xi_j$ are linear functionals of the explanatory variable $X$, defined in RKHS terms. This pioneering reference shows very clearly the huge potential of the RKHS approach. However, as the authors point out, there are still many aspects not considered in that paper and worth of attention. Variable selection is one of them. In fact, the whole point of the present paper is to show that things become particularly simple when the RKHS machinery is applied to variable selection. A recent use of the RKHS methods in the problem of functional binary classification is developed in
\cite{berrendero2017}.
See also \cite{berlinet2004} and \cite{hsing2015theoretical} for a broader perspective of the applicability of RKHS methods in statistics. 

Other variable selection methods, always aimed at selected the ``best points'' $t_1,\ldots,t_d$ (or the ``best variables'' $X(t_1),\ldots,X(t_p)$) have been proposed as well, with no explicit reference of RKHS tools. Thus, the selection of the ``best impact point'' $t_1$ in a model of type with $p=1$ \eqref{Eq:L2VS} is addressed 
by \cite{mckeague2010}.
Also, different variable selection methods have been suggested by \cite{ferraty2010most} 
and \cite{delaigle2012} for prediction and classification purposes, respectively.


\

\noindent \textit{Some notation}

A set of possible ``impact'' points  $t_1,\ldots,t_p\in [0,1]$ will be denoted $T$ (sometimes $S$) or $T_p$ when we want to stress the cardinality of $T$.  Also, $X(T_p)$ will stand for $(X(t_1),\ldots,X(t_p))'$. The superindex $^*$ will be used to denote that the points $t_i^*$ are the ``true'' ones, or the ``optimal'' ones according to some criterion.

Given a random variable $Z$ (with finite variance) the notation  $Z_{T_p}$ will refer to the $L^2$-projection of $Z$  on the space spanned by the components of $X(T_p)-m(T_p)$.

If $p^*<p$, the notation $T_{p^*}\prec T_p$ will indicate that all the points in $T_{p^*}$ belong also to $T_p$.

Finally, as usual in statistics, we use a hat to denote the estimated quantities (or the predicted variables). For instance, $\widehat{T}_p$ will denote a data-driven estimator of $T_p$ and   $\widehat{Y}_{\widehat{T}_p}$ will stand by the corresponding (fully data-driven) prediction of the response $Y_{T_p}$. The halfway notation $Y_{\widehat{T}_p}$ will represent the  projection of the response variable onto the space spanned by the marginal variables indexed by the estimated points $\widehat{T}_p$. 

\

\noindent \textit{Organization of the paper}

In Section \ref{Sec:RKHS} we introduce and motivate (in population terms) our variable selection 
procedure. The asymptotic properties of the empirical version (when the parameters are estimated) are considered in Section \ref{Sec:SampleProperties}. The problems associated with the choice of the number $p$ of selected variables are analyzed in Sections \ref{Sec:NumberP} and \ref{Sec:pnot}. The empirical results (simulations and real data examples) are presented in Section \ref{Sec:experiments}. Some technical proofs are included in Section \ref{Sec:Proofs}.

\section{An RKHS-based linear model suitable for variable selection}\label{Sec:RKHS}

Our choice of the ambient space for the slope function $\beta$ is, in some sense, ``customized'' for the problem at hand, since we will consider the Reproducing Kernel Hilbert Space (RKHS) associated with the process $\{X(t),\ t\in[0,1]\}$.

 The theory of RKHS goes back to the 1950's; see \citet[Appendix F]{janson1997} for details and references. It has found a surprisingly large number of applications in different fields, including statistics, see \cite{berlinet2004}.

\


\subsection{RKHS spaces in a nutshell } \label{Sec:TeoRKHS}

Before establishing our RKHS-based regression model we need a minimal background on RKHS.  Our starting point will be our underlying $L^2$-process $X=\{X(t)\, t\in[0,1]\}$ with a continuous strictly positively definite covariance function $K(s,t)$ and a continuous mean function $m(t)$.

We first introduce an auxiliary space, associated with $K$, which we will denote by  ${\mathcal H}_0(K)$. It is defined by the set of all finite linear combinations of type $\sum_i^na_iK(s,t_i)$, that is,
$$
{\mathcal H}_0(K):=\{f:\ f(s)=\sum_{i=1}^na_iK(s,t_i),\ a_i\in{\mathbb R},\ t_i\in[0,1],\ n\in{\mathbb N}\}.
$$
In such space we define an inner product $\langle\cdot,\cdot\rangle_K$ by
$
\langle f,g\rangle_K=\sum_{i,j}\alpha_i\beta_jK(s_j,t_i),
$
where $f(x)=\sum_i\alpha_i K(x,t_i)$ and $g(x)=\sum_j\beta_jK(x,s_j)$.

Now, the RKHS associated with $K$, denoted by ${\mathcal H}(K)$,  is defined as the completion of ${\mathcal H}_0(K)$. More precisely, ${\mathcal H}(K)$ is the set of functions $f:[0,1]\rightarrow {\mathbb R}$ obtained as $t$-pointwise limits of Cauchy sequences $\{f_n\}$ in ${\mathcal H}_0(K)$; see \citet[p. 18]{berlinet2004}.

As a conclusion, in heuristic terms, one could say that ${\mathcal H}(K)$ is made of all linear combinations of type $f(s)=\sum_i^na_iK(s,t_i)$ plus all the functions which can be obtained as limits  of them. A natural question is when we can ensure that we have identifiability in this space. It is easy to see that the elements of $\mathcal{H}_0(K)$ have a unique representation in terms of $K$ whenever $K$ is strictly positive definite. For additional details we refer again to \citet[Appendix F]{janson1997}. 

Among the many interesting properties of RKHS spaces, let us especially recall two which will be particularly useful in what follows. 

\

\textit{Reproducing property}. 
$f(t)=\langle f,K(\cdot,t)\rangle_K$, for all $f\in{\mathcal H}(K)$, $t\in[0,1]$.

\

\textit{Natural congruence}.  Denote by ${\mathcal L}_X$, the linear (centered) span of $X$ (i.e. the family of finite linear combinations of type $\sum_i\lambda_i (X(t_i)-m(t_i))$). Let $\overline {\mathcal L}_X$ be the $L^2$-completion of ${\mathcal L}_X$. It is clear that $\overline {\mathcal L}_X$ is a closed subspace of the usual Hilbert space $L^2(\Omega)$ of random variables with a finite second moment; this can be seen as the minimal Hilbert space including the variables $X(t)$. It can be proved, see \citet[Th. 35]{berlinet2004}, that $\Psi_X(\sum_ia_i(X(t_i)-m(t_i))=\sum_i a_i K(\cdot,t_i)$
defines (when extended by continuity) a congruence between $\overline {\mathcal L}_X$ and ${\mathcal H}(K)$. This means that the extension of  $\Psi_X$ is a linear bijective transformation which preserves the inner product. Such congruence is often called \textit{Lo\`eve's isometry}. In explicit terms (see \citet[Lemma 1.1]{lukic2001}), Lo\`eve's isometry between  $\overline {\mathcal L}_X$
and ${\mathcal H}(K)$ can be defined by 
\begin{equation}\label{isometry}
Y\mapsto \Psi_X(Y)=\Psi_X(Y)(t)={\mathbb E}(Y(X(t)-m(t)))
\end{equation}

So, the RKHS space ${\mathcal H}(K)$ is an isometric copy  of $\overline {\mathcal L}_X$. In this sense, it contains the ``Dirac deltas'' $X(t_i)-m(t_i)$ at $t_i$ (as ${\mathcal H}(K)$ contains $K(\cdot,t_i)$). Both spaces $\overline {\mathcal L}_X$ and ${\mathcal H}(K)$ can be identified. 

There is, however, a not-so-nice feature in the RKHS space associated with the process $X(t)$: under very general conditions, this space does not contain, with probability one, the trajectories of the process $X$; see, e.g., \citet[Th. 11]{pillai2007}, \cite[Cor. 7.1]{lukic2001}. This will have some consequences in the formulation of our regression model, as pointed out below.

\


\subsection{The RKHS functional regression model}\label{Sec:Model}

We propose to replace the standard $L^2$ functional regression model \eqref{Eq:L2} with the following RKHS counterpart
\begin{equation}
\label{Eq:RKHSmodel}
Y_i=\alpha_0+\langle X_i,\beta\rangle_K+\varepsilon_i,\ i=1,\ldots,n
\end{equation}
where $\beta\in {\mathcal H}(K)$ and $\langle\cdot,\cdot\rangle_K$ denotes the inner product in ${\mathcal H}(K)$. 

Since the estimation of the intercept term $\alpha_0$ is straightforward from those of $\beta$ and $m$, we will assume, without loss of generality, that $\alpha_0=0$ in what follows. 

As mentioned at the end of the previous subsection, it is important to keep in mind that the trajectories $X_i$ of the process $X$ do not belong to ${\mathcal H}(K)$. Thus, the expression $\langle X_i,\beta\rangle_K$ has no direct meaning, unless it is appropriately interpreted: in what follows, $\langle X_i,\beta\rangle_K$
must be understood as $\Psi_{X_i}^{-1}(\beta)$, where $\Psi_X$ is defined in \eqref{isometry}. Such an interpretation of $\langle X,\beta\rangle_K$ arises in the classical paper by \citet[Th. 7A]{parzen1961}, aiming at different statistical purposes.  This situation is conceptually similar to that arising in the definition of It\^o's integral, so that $\langle X,\beta\rangle_K$ can be interpreted as a stochastic integral. In fact, when $X(t)$ is a standard Brownian Motion  model \eqref{Eq:RKHSmodel} is equivalent to 
\[
Y = \int_0^1 \beta (s) dX(s) +\varepsilon,\ \mbox{with }\beta\in{\mathcal H}(K),
\]
where $\int_0^1 \beta (s) dX(s)$ is It\^o's integral.


\subsection{Variable selection in the RKHS functional regression model}

Consider the RKHS functional regression model \eqref{Eq:RKHSmodel} introduced in the previous paragraph,
where $\mathbb{E}(\varepsilon)=\mathbb{E}((X(t)-m(t))\varepsilon)=0$ and $\mbox{Var}(\varepsilon)=\sigma^2$.

\

\noindent \textit{Our goal}

Under this model, for fixed $p$, we aim at selecting $p$ values $t_1,\ldots,t_p$ in order to use the $p$ dimensional vector $(X(t_1),\ldots,X(t_p))$ instead of the whole trajectory $\{X(t):\, t\in [0,1]\}$ in our regression problem. Formally, we want to establish a transformation
$$
\{X(t):\, t\in [0,1]\}\mapsto (X(t_1),\ldots,X(t_p)),
$$
which should be ``optimal'' in the sense that the points $(t_1,\ldots,t_p)$ are chosen according to an optimality criterion, oriented to minimize the information loss in the passage from infinite to finite dimension.

In this section, we address this problem at the population level, that is, we assume that the parameters defining the model (the regression function $\beta$, the covariance function $K$ of the process $X$, the mean function $m$ and the variance of the error variable, $\sigma^2$) are known. Of course, the practical implementation will require using suitable estimators of the unknown parameters. This raises several questions concerning the sample behavior of the method which will be addressed in subsequent sections.

\

\noindent \textit{The optimality criterion $Q_1$}

The first obvious question to address in such strategy is the choice of the optimality criterion. We will see that, in fact, different criteria can be used but, fortunately, they are all equivalent. 

One of the basic goals of a functional regression model is to predict the value of the response variable for a given realization of the regressor function. Then, a sensible approach for variable selection is to choose the $p$ points $(X(t_1),\ldots,X(t_p))$ that give the best linear prediction (in the sense of the $L^2$ norm) of $Y$. This implies to find the vector $T_p$ that minimizes the function
\begin{equation}
\label{Eq:Q1}
Q_1(T_p) := \min_{(\beta_1,\ldots,\beta_p)\in\mathbb{R}^p} \|Y - \sum_{j=1}^p \beta_j (X(t_j)-m(t_j))\|^2_2.
\end{equation}

\

\noindent \textit{Where to look for the optimum}

An important technical aspect is the choice of an appropriate subset $\Theta_p\subset [0,1]^p$ to look for the optimum of the continuous function $Q_1$.  This subset must be compact in order to guarantee the existence of the optimum. Moreover, if we want to get a meaningful optimal value of $T_p=(t_1,\ldots,t_p)$ 
we should rule out those points including repeated values in the coordinates $t_i$. To this end,  we will fix an arbitrarily small fixed value $\delta>0$ , and will look for our optimum in the space 
\begin{equation}
\label{Eq:EspTheta}
\Theta_p = \Theta_p(\delta) = \{T_p=(t_1,\ldots,t_p)\in [0,1]^p:\,  t_{i+1}-t_i\geq \delta,\ \mbox{for } i=0,\ldots,p\},  
\end{equation}where $t_0=0$, $t_{p+1}=1$. In practice, the restriction to the subset $\Theta_p$ is not relevant, since we observe the functions in a finite grid, and we can set $\delta>0$ as small as required so that all the points in the grid belong to $\Theta_p$.

The reason for the choice  \eqref{Eq:EspTheta} of $\Theta_p$ is technical, very much in the same spirit of \citet[Eq. (9)]{muller2016}. We need to work on a compact set and, at the same time, to avoid degeneracy problems in the choice of the points $(t_1,\ldots,t_p)$ that could lead to a singular covariance matrix  in $(X(t_1),\ldots,X(t_p))$. 
Other choices are possible for $\Theta_p$. For example, one could think of defining $t_0=0$, $t_{p+1}=1$ and
\begin{equation}
\label{Eq:EspTheta-bis}
\Theta_p=\{T_p=(t_1,\ldots,t_p)\in [0,1]^p:\,  t_i\leq t_{i+1},\ \mbox{for } i=0,\ldots,p\}
\end{equation}
This could lead to ``degenerate'' options with $t_i=t_{i+1}$ for some values of $i$. However, the theory we develop below using \eqref{Eq:EspTheta} could be carried out alternatively with \eqref{Eq:EspTheta-bis} as long as we adopt the criterion of reducing the dimension of those vectors $(t_1,\ldots,t_d)$ with ties in the coordinate values by keeping just one coordinate for each different value.  
In this way, for example, $(1/3,1/3,1/2,3/4)$ would be interpreted just as $(1/3,1/2,3/4)$.

\

\noindent \textit{Two additional, equivalent optimality criteria}

A second optimality criterion, equivalent  to that based on $Q_1$, arises if we take into account that, 
from the reproducing property, when the regression function  is a finite linear combination of the form $\sum_{i=1}^p \beta_i K(t_i,\cdot)$, model \eqref{Eq:RKHSmodel} reduces to the usual finite dimensional multiple regression model:
\begin{equation}\label{Eq:RKHSmodel-finito}
Y = \sum_{i=1}^p \beta_j (X(t_j)-m(t_j)) + \varepsilon.	
\end{equation}	

Then, another sensible approach for variable selection is to choose those points $t_1,\ldots,t_p$ giving the best approximation of the true regression function $\beta$ in terms of a finite linear combination of the form $\sum_{i=1}^p \beta_i K(t_i,\cdot)$.  It is quite natural to use the norm in $\mathcal{H}(K)$ to assess this approximation since both $\beta$ and these finite linear combinations live in this RKHS. This approach amounts to  find the vector $T_p\in \Theta_p$ that minimizes the function
\begin{equation}
\label{Eq:Q2}
Q_2(T_p) := \min_{(\beta_1,\ldots,\beta_p)\in\mathbb{R}^p}\|\beta - \sum_{j=1}^p \beta_j K(t_j,\cdot)\|^2_K.
\end{equation}

Proposition \ref{Prop:Equivalence} below shows that the variable selection procedures defined by \eqref{Eq:Q1} and \eqref{Eq:Q2}, although apparently different, are indeed equivalent. Moreover, in the proof of Proposition \ref{Prop:Equivalence} 
we will see that the minimum in the expressions of $Q_1$ and $Q_2$ is achieved at the value  
$$
(\beta_1^*,\ldots,\beta_p^*)=\Sigma_T^{-1}c_T, 
$$ 
where $c_T=(\mbox{cov}(X(t_1),Y),\ldots, \mbox{cov}(X(t_p),Y))'$ and $\Sigma_T$ is the covariance matrix of $X_T$, for $T=(t_1,\ldots,t_p)$.

In addition, we show that the $Q_1$ and $Q_2$-based criteria are also both equivalent to a third criterion, defined in terms of a functional $Q_0$,  which only depends on the covariances $K(t_i,t_j)$ and $\mbox{Cov}(X(t_i),Y)$ for $i,j=1,\ldots,p$. This $Q_0$ criterion turns out to be especially useful to implement the method in practice.

\begin{proposition}
\label{Prop:Equivalence}
Assume that $Y$ and $X$ fulfil the RKHS functional regression model \eqref{Eq:RKHSmodel}. Then,
\begin{equation}
\label{Eq:Equivalence}
\argmin_{T_p\in\Theta_p} Q_1(T_p)=\argmin_{T_p\in\Theta_p} Q_2(T_p) = \argmax_{T_p\in\Theta_p} Q_0(T_p),
\end{equation}
where $Q_1$ and $Q_2$ are defined in \eqref{Eq:Q1} and \eqref{Eq:Q2} respectively, and 
\begin{equation}
\label{Eq:Q0}
Q_0(T_p):=c'_{T_p}\Sigma^{-1}_{T_p}c_{T_p},
\end{equation}
with $c_{T_p} = (\mbox{Cov}(X(t_1), Y),\ldots,\mbox{Cov}(X(t_p), Y))'$ and $\Sigma_{T_p}$ the $p \times p$ matrix with entries $K(t_i,t_j)$.
\end{proposition}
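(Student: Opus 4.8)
The plan is to evaluate the inner minimizations defining $Q_1$ and $Q_2$ in closed form and observe that each equals a constant independent of $T_p$ minus $Q_0(T_p)$. Since both problems then reduce to maximizing the single functional $Q_0$, the three $\argmin$/$\argmax$ sets coincide pointwise, and the common optimal coefficient vector turns out to be $\beta^*=\Sigma_{T_p}^{-1}c_{T_p}$. Throughout I would use that the restriction to $\Theta_p$, together with the strict positive definiteness of $K$, guarantees that $\Sigma_{T_p}$ is invertible, so that all the projections below are well defined.

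First I would handle $Q_1$. For fixed $T_p\in\Theta_p$, the inner minimization in \eqref{Eq:Q1} is exactly the orthogonal projection of $Y$ onto the finite-dimensional subspace of $L^2(\Omega)$ spanned by the centered variables $X(t_j)-m(t_j)$. The Gram matrix of this spanning family is $\Sigma_{T_p}$, since its $(i,j)$ entry is $\text{Cov}(X(t_i),X(t_j))=K(t_i,t_j)$, and the vector of $L^2(\Omega)$-inner products of $Y$ with the spanning variables is $c_{T_p}$, since $\mathbb{E}(Y(X(t_j)-m(t_j)))=\text{Cov}(X(t_j),Y)$. Solving the normal equations gives $\beta^*=\Sigma_{T_p}^{-1}c_{T_p}$ and minimum value $Q_1(T_p)=\|Y\|_2^2-c_{T_p}'\Sigma_{T_p}^{-1}c_{T_p}=\text{Var}(Y)-Q_0(T_p)$. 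Here $\|Y\|_2^2=\text{Var}(Y)$ because the assumptions $\alpha_0=0$ and $\mathbb{E}(\varepsilon)=0$, together with $\Psi_X^{-1}(\beta)\in\overline{\mathcal{L}}_X$ having zero mean, force $\mathbb{E}(Y)=0$.

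Next I would treat $Q_2$, which is formally the same projection problem but carried out inside $\mathcal{H}(K)$: minimizing $\|\beta-\sum_j\beta_jK(t_j,\cdot)\|_K^2$ is projecting $\beta$ onto the span of $\{K(t_j,\cdot)\}_{j=1}^p$. By the reproducing property the Gram matrix of this family is again $\Sigma_{T_p}$, since $\langle K(t_i,\cdot),K(t_j,\cdot)\rangle_K=K(t_i,t_j)$, and the vector of inner products $\langle\beta,K(t_j,\cdot)\rangle_K$ equals $b_{T_p}:=(\beta(t_1),\ldots,\beta(t_p))'$, again by the reproducing property. Hence $Q_2(T_p)=\|\beta\|_K^2-b_{T_p}'\Sigma_{T_p}^{-1}b_{T_p}$.

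The crux of the argument, and the step I expect to be the main obstacle, is identifying $b_{T_p}$ with $c_{T_p}$, since this is where the RKHS interpretation of the model must be invoked with care. Because the trajectories of $X$ do not lie in $\mathcal{H}(K)$, the term $\langle X,\beta\rangle_K$ in \eqref{Eq:RKHSmodel} is to be read as $\Psi_X^{-1}(\beta)\in\overline{\mathcal{L}}_X$. Using that $\varepsilon$ is uncorrelated with each $X(t_j)$, I would write $\text{Cov}(X(t_j),Y)=\text{Cov}(X(t_j)-m(t_j),\Psi_X^{-1}(\beta))$, and then transport this $L^2(\Omega)$-inner product to $\mathcal{H}(K)$ via Lo\`eve's isometry \eqref{isometry}, obtaining $\langle\Psi_X(X(t_j)-m(t_j)),\beta\rangle_K=\langle K(\cdot,t_j),\beta\rangle_K$, which by the reproducing property is $\beta(t_j)$. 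Therefore $b_{T_p}=c_{T_p}$, so $Q_2(T_p)=\|\beta\|_K^2-Q_0(T_p)$ with the same optimal coefficients $\beta^*=\Sigma_{T_p}^{-1}c_{T_p}$ as for $Q_1$. Since $\text{Var}(Y)$ and $\|\beta\|_K^2$ are constants independent of $T_p$, minimizing either $Q_1$ or $Q_2$ is the same as maximizing $Q_0$, which yields \eqref{Eq:Equivalence}.
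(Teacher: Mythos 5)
Your proof is correct, and it reaches the same crux as the paper --- the identification $\beta(t_j)=\mbox{Cov}(X(t_j),Y)$ via Lo\`eve's isometry together with the reproducing property, and the quadratic minimization giving the optimal coefficients $\Sigma_{T_p}^{-1}c_{T_p}$ --- but it is organized differently on the $Q_1$ side. The paper first proves $Q_1(T_p)=Q_2(T_p)+\sigma^2$ by applying the isometry to the whole residual $\langle X,\beta\rangle_K-\sum_j\alpha_j(X(t_j)-m(t_j))$ (using $\Psi_X(X(t_j)-m(t_j))=K(\cdot,t_j)$ and $\Psi_X(\langle X,\beta\rangle_K)=\beta$), and only then solves the single RKHS minimization; you instead solve the two projection problems separately by normal equations, obtaining $Q_1(T_p)=\|Y\|_2^2-Q_0(T_p)$ and $Q_2(T_p)=\|\beta\|_K^2-\beta_{T_p}'\Sigma_{T_p}^{-1}\beta_{T_p}$, and invoke the isometry only pointwise to get $\beta_{T_p}=c_{T_p}$. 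Both routes are sound: yours is symmetric and needs the isometry only coordinatewise (your remark that $\mathbb{E}(Y)=0$ is correct but not even needed, since all that matters is that $\|Y\|_2^2$ does not depend on $T_p$), while the paper's route yields the exact relation $Q_1=Q_2+\sigma^2$, which is reused later (e.g.\ in the proof of Lemma \ref{Lemma:Qp+1-Qnto0}); in your version that relation would require the additional observation $\mbox{Var}(Y)=\|\beta\|_K^2+\sigma^2$, which again follows from the isometry. Your appeal to strict positive definiteness of $K$ and the separation built into $\Theta_p$ to justify invertibility of $\Sigma_{T_p}$ is consistent with the paper's standing assumptions.
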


\begin{proof}
Since  $\mathbb{E}(\varepsilon)=0$ and  $\langle X(t), \epsilon\rangle_2 = 0$,
\[
\|Y - \sum_{j=1}^p \alpha_j (X(t_j) - m(t_j))\|^2_2 =  \|\langle X,\beta\rangle_K - \sum_{j=1}^p \alpha_j (X(t_j)-m(t_j))\|^2_2 + \sigma^2.
\]
On the other hand, Loève's isometry implies
\[
 \|\langle X,\beta\rangle_K - \sum_{j=1}^p \alpha_j (X(t_j)-m(t_j))\|^2_2 = \|\beta - \sum_{j=1}^p \alpha_j K(t_j,\cdot)\|^2_K.
 \]
From the last two equations, it follows that $Q_1(T_p)=Q_2(T_p)+\sigma^2$ and hence the first equality in \eqref{Eq:Equivalence}.

By the reproducing property,
\begin{equation}\label{Eq:Q2des}
\|\beta - \sum_{j=1}^p \alpha_j K(t_j,\cdot)\|^2_{K} = \|\beta\|^2_K - 2\sum_{j=1}^p \alpha_j\beta(t_j) + \sum_{i=1}^p\sum_{j=1}^p \alpha_i\alpha_j K(t_i,t_j).
\end{equation}

The function $K$ is positive semidefinite so that the last function is convex in $\alpha=(\alpha_1,\ldots,\alpha_p)$. By computing its gradient (with respect to $\alpha$) it is very easy to see that the minimum is achieved at $\alpha^* = (\alpha_1^*,\ldots, \alpha_p^*) = \Sigma^{-1}_{T_p}\beta_{T_p}$. Then,
\begin{equation}
\label{eq.auxequivalence}
Q_2(T_p) =  \|\beta - \sum_{j=1}^p \alpha^*_j K(t_j,\cdot)\|^2_K =  \|\beta\|^2_K - \beta'_{T_p}\Sigma^{-1}_{T_p}\beta_{T_p},
\end{equation}
where $\beta_{T_p}=(\beta(t_1),\ldots,\beta(t_p))'$.
Finally, using $\mathbb{E}(X(t)\varepsilon)=0$ and Equation \eqref{isometry} we get
\[
\mbox{Cov}(Y,X(t)) = \mathbb{E}\big[\langle X,\beta\rangle_K (X(t)-m(t))\big] = \Psi_X(\langle X,\beta\rangle_K)(t) = \beta(t).
\]
To obtain the last equality, recall that $\langle X,\beta\rangle_K=\Psi^{-1}_X(\beta)$. 
Therefore $\beta_{T_p}=c_{T_p}$ and, by \eqref{eq.auxequivalence}, $Q_2(T_p)=\|\beta\|^2_K - Q_0(T_p)$. This implies  the second equality in \eqref{Eq:Equivalence}.
\end{proof}

The criterion provided by $Q_0$ (or $Q_1, Q_2$) for variable selection was already considered by \cite{mckeague2010}, for $p=1$, when $X(t)$ is a fractional Brownian Motion with Hurst exponent $H\in (0,1)$, and by  \cite{muller2016} for $p\geq 1$ in the usual $L^2$ functional regression model. The RKHS formalism we incorporate here provides a simple way to describe the scenario under which variable election would lead to the optimal solution (with no loss of information). Variable selection is specially suitable when the true regression model is sparse, meaning that the response depend on the explanatory variables through their values at a finite small number of $p^*$ points. As it was mentioned before, this is the case under \eqref{Eq:RKHSmodel} when 
\begin{equation}
\label{Eq:Sparse}
\beta(t)=\sum_{j=1}^{p^*} \beta_j K(t_j^*,t).
\end{equation}
Let $T^*_{p^*} = (t_1^*,\ldots, t^*_{p^*})\in \Theta_{p^*}$. Then, it is clear that, under \eqref{Eq:Sparse},
\[
Q_2(T^*_{p^*}) = 0\leq Q_2(T_{p^*}),\ \ \mbox{for all}\ T_{p^*} \in \Theta_{p^*}.
\]
As a consequence, the true set of relevant variables $T^*_{p^*}$ is the one selected by the optimization of the functions in Proposition \ref{Prop:Equivalence}. In this reasoning we have considered the case when we know the actual number of points $p^*$ to be selected. In practice, this is not usually the case. However, notice that if we make a conservative choice, taking a number of variables $p$ larger than the true one ($p > p^*$), the true relevant variables $T^*_{p^*}$ will always be included among the selected ones. This is a consequence of the fact that, by \eqref{Eq:RKHSmodel}, \eqref{Eq:Sparse} and $\varepsilon \in \overline{\mathcal{L}}_X^\perp$, the projection of $Y$ onto $ \overline{\mathcal{L}}_X$ is the same as the projection onto the span of $X(t^*_1),\ldots, X(t^*_{p^*})$. This implies that, if we are looking for the subspace of dimension  $p>p^*$ that minimizes the distance of $Y$ to its projection (see Equation \eqref{Eq:Q1}), this space will always include the span of $X(t^*_1),\ldots, X(t^*_{p^*})$. 

Recall that we use the notation $T^* \prec T \in \Theta_p$ meaning that $T^*$ is a sub-vector of $T$, that is, that the components of $T^*$ are included within those of $T$. With this notation, what we have shown is that, under \eqref{Eq:Sparse},  $T^* = \left( t_1^*,\ldots,t_{p^*}^*\right) \prec \argmax  Q_0(T_p)$, for $p\geq p^*$. In Section \ref{Sec:NumberP} it is addressed the problem of what to do when $p^*$ is unknown from an empirical point of view.



\subsection{A recursive expression}

The function $Q_0$ defined in \eqref{Eq:Q0} can be   rewritten in an alternative way, which is useful to analyze the gain when we add a new variable to a set of variables already selected. Moreover, this alternative expression paves the way for a sequential implementation of the variable selection method. To give the formula, besides the notation $c_T$ and $\Sigma_T$, introduced earlier, we will also use $c_j$ to denote $\mbox{cov}(X(t_j), Y)$, $\sigma_j^2$ to denote $\mbox{var} (X(t_j))$, and $c_{T_p,j}$ to denote the vector $\left( \mbox{cov}(X(t_1),X(t_j)), \ldots, \mbox{cov}(X(t_p),X(t_j)) \right)'$.

\begin{proposition}\label{Prop:IterativeQ0Matrices}
Given $T_{p+1}= (t_1,\ldots,t_p,t_{p+1})\in\Theta_{p+1}$, $p\geq 1$, and $T_p \prec T_{p+1}$, for some $p\geq 1$ such that the covariance matrices of the process $\Sigma_{T_{p+1}}$ are invertible for all $T_{p+1}\in \Theta_{p+1}$,
\begin{equation}\label{Eq:IterativeQ0Matrices}
Q_0(T_{p+1}) \hspace*{2mm}=\hspace*{2mm} Q_0(T_p) \hspace*{1mm}+\hspace*{1mm} \frac{\left( c_{T_p}'\Sigma_{T_p}^{-1}c_{T_p,p+1} - c_{p+1} \right)^2}{\sigma_{p+1}^2 - c_{T_p,p+1}'\Sigma_{T_p}^{-1}c_{T_p,p+1}}.
\end{equation}
\end{proposition}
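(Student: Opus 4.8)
The plan is to treat this as a block-matrix (Schur complement) computation. First I would partition the augmented quantities according to the splitting of $T_{p+1}$ into $T_p$ and the new point $t_{p+1}$, writing
\begin{equation*}
\Sigma_{T_{p+1}}=\begin{pmatrix}\Sigma_{T_p} & c_{T_p,p+1}\\ c_{T_p,p+1}' & \sigma_{p+1}^2\end{pmatrix},\qquad c_{T_{p+1}}=\begin{pmatrix}c_{T_p}\\ c_{p+1}\end{pmatrix},
\end{equation*}
where the off-diagonal block $c_{T_p,p+1}$ collects the covariances of $X(t_{p+1})$ with the earlier $X(t_i)$, and the bottom-right scalar is $\mbox{var}(X(t_{p+1}))=\sigma_{p+1}^2$.

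Next I would invert the block matrix. Since $\Sigma_{T_p}$ is invertible by hypothesis, the standard block-inversion formula applies, with the Schur complement
\[
s:=\sigma_{p+1}^2 - c_{T_p,p+1}'\Sigma_{T_p}^{-1}c_{T_p,p+1}
\]
appearing in the denominators. The factorization $\det\Sigma_{T_{p+1}}=s\,\det\Sigma_{T_p}$, together with the assumed invertibility of $\Sigma_{T_{p+1}}$ for all $T_{p+1}\in\Theta_{p+1}$, guarantees $s\neq 0$, so this Schur complement is exactly the (nonzero) denominator that must appear in \eqref{Eq:IterativeQ0Matrices}.

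Then I would substitute the block inverse into the quadratic form $Q_0(T_{p+1})=c_{T_{p+1}}'\Sigma_{T_{p+1}}^{-1}c_{T_{p+1}}$ and multiply out. Grouping the terms that do not carry the factor $s^{-1}$ reproduces $c_{T_p}'\Sigma_{T_p}^{-1}c_{T_p}=Q_0(T_p)$, while the remaining terms all carry $s^{-1}$. Using the symmetry of $\Sigma_{T_p}^{-1}$ to identify the two cross terms $c_{T_p}'\Sigma_{T_p}^{-1}c_{T_p,p+1}$ and $c_{T_p,p+1}'\Sigma_{T_p}^{-1}c_{T_p}$ as one and the same scalar, these remaining contributions collapse into the perfect square $(c_{T_p}'\Sigma_{T_p}^{-1}c_{T_p,p+1}-c_{p+1})^2/s$, which is precisely the added term in \eqref{Eq:IterativeQ0Matrices}.

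The computation is essentially bookkeeping once the block inverse is written down; the only step requiring a little care is recognizing that the $s^{-1}$-terms assemble into a perfect square rather than leaving a stray cross term. This is what makes the increment manifestly nonnegative and lets it be read as the gain obtained by appending $t_{p+1}$ to $T_p$. I do not anticipate any genuine obstacle beyond tracking the several scalar and vector products correctly.
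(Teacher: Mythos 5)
Your proposal is correct and follows essentially the same route as the paper's proof: the same block partition of $\Sigma_{T_{p+1}}$ and $c_{T_{p+1}}$, inversion via the Schur complement $a=\sigma_{p+1}^2-c_{T_p,p+1}'\Sigma_{T_p}^{-1}c_{T_p,p+1}$, and expansion of the quadratic form so the $a^{-1}$-terms collapse into the perfect square. Your extra remark that $\det\Sigma_{T_{p+1}}=a\,\det\Sigma_{T_p}$ justifies the nonvanishing of the denominator is a small, welcome addition but does not change the argument.
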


 Equation \eqref{Eq:IterativeQ0Matrices} is useful to simplify other derivations in the paper and it is shown in Section \ref{Sec:ProofIterativeQ0Matrices}. Actually, it can be also proved that the quotient in \eqref{Eq:IterativeQ0Matrices} tends to zero when $t_{p+1}$ tends to one of the points in $T_p$, so that selecting a point too close to one of those already selected is redundant and non-informative according to this criterion. The proof of this fact is  quite technical so it is omitted.
 
Equation \eqref{Eq:IterativeQ0Matrices} already appears in the well-known forward selection method for variable selection in multiple regression  (see e.g. \cite{miller2002}, Section 3.2).  A modification of the resulting expression is also used in the variable selection method proposed by \cite{yenigun2015}, still in the multivariate regression setting. In such alternative version,  the usual covariance is replaced by the distance covariance, defined in \cite{szekely2013}.

The quotient in Equation \eqref{Eq:IterativeQ0Matrices} can be written in a more insightful way, as shown in the following result.
\begin{proposition}\label{Prop:IterativeQ0Vars}
In the above defined setup, denoting $X(T_p) = \big( X(t_1),\ldots, X(t_p) \big)'$, and $Y_{T_p} = P_{\mbox{span}\{X(t_i),t_i\in T_p\}} Y$ and $X_{T_p}(t_{p+1}) = P_{\mbox{span}\{X(t_i),t_i\in T_p\}} X(t_{p+1})$ the projections on $\mbox{span}\{X(t_i),t_i\in T_p\}$, 
\begin{equation}\label{Eq:IterativeQ0}
Q_0(T_{p+1}) \hspace*{2mm}=\hspace*{2mm} Q_0(T_p) \hspace*{1mm}+\hspace*{1mm} \frac{cov^2\big(Y - Y_{T_p},\hspace*{1mm} X(t_{p+1})\big)}{\mbox{var}\big( X(t_{p+1}) - X_{T_p}(t_{p+1})\big)}.
\end{equation}
\end{proposition}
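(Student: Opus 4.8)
The plan is to take Proposition \ref{Prop:IterativeQ0Matrices} as already proved and simply reconcile the algebraic increment appearing there with the statistical increment in \eqref{Eq:IterativeQ0}. Since both formulas share the same base term $Q_0(T_p)$, it suffices to verify the identity
\[
\frac{\left( c_{T_p}'\Sigma_{T_p}^{-1}c_{T_p,p+1} - c_{p+1} \right)^2}{\sigma_{p+1}^2 - c_{T_p,p+1}'\Sigma_{T_p}^{-1}c_{T_p,p+1}} \hspace*{1mm}=\hspace*{1mm} \frac{\mbox{cov}^2\big(Y - Y_{T_p},\, X(t_{p+1})\big)}{\mbox{var}\big( X(t_{p+1}) - X_{T_p}(t_{p+1})\big)},
\]
which I would establish by matching numerator and denominator separately.

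First I would make the projection coefficients explicit, using the centered span $\mbox{span}\{X(t_i)-m(t_i):t_i\in T_p\}$ as in the notation conventions (covariances being unaffected by centering). Writing the $L^2$-projection $X_{T_p}(t_{p+1})=\sum_i a_i(X(t_i)-m(t_i))$ and imposing the orthogonality (normal) equations $\mbox{cov}\big(X(t_{p+1})-X_{T_p}(t_{p+1}),\,X(t_j)-m(t_j)\big)=0$ yields $\Sigma_{T_p}a=c_{T_p,p+1}$, hence $a=\Sigma_{T_p}^{-1}c_{T_p,p+1}$. The same argument applied to $Y_{T_p}=\sum_i b_i(X(t_i)-m(t_i))$ gives $b=\Sigma_{T_p}^{-1}c_{T_p}$.

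For the denominator I would invoke the Pythagorean identity for the orthogonal decomposition $X(t_{p+1})=X_{T_p}(t_{p+1})+\big(X(t_{p+1})-X_{T_p}(t_{p+1})\big)$, which gives $\mbox{var}\big(X(t_{p+1})-X_{T_p}(t_{p+1})\big)=\sigma_{p+1}^2-\mbox{var}\big(X_{T_p}(t_{p+1})\big)$; substituting $a$ and using $\mbox{var}\big(X_{T_p}(t_{p+1})\big)=a'\Sigma_{T_p}a=c_{T_p,p+1}'\Sigma_{T_p}^{-1}c_{T_p,p+1}$ recovers the denominator exactly. For the numerator I would expand by bilinearity, $\mbox{cov}\big(Y-Y_{T_p},X(t_{p+1})\big)=c_{p+1}-b'c_{T_p,p+1}=c_{p+1}-c_{T_p}'\Sigma_{T_p}^{-1}c_{T_p,p+1}$, whose square is the numerator (the overall sign being irrelevant after squaring). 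Combining the two identifications gives \eqref{Eq:IterativeQ0}.

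I do not expect a genuine obstacle: the result is essentially a dictionary translation of the matrix quantities $\Sigma_{T_p}^{-1}c_{T_p,p+1}$ and $\Sigma_{T_p}^{-1}c_{T_p}$ into projection language, and the only points requiring care are bookkeeping ones — keeping the projections on the centered span (so that constants drop out of every covariance and $\mbox{var}(X_{T_p}(t_{p+1}))=a'\Sigma_{T_p}a$ holds with no mean correction) and observing that the sign of the numerator is immaterial. If anything, the mildly delicate step is the Pythagorean reduction of the residual variance, which relies on $X(t_{p+1})-X_{T_p}(t_{p+1})$ being $L^2$-orthogonal to the span; this is guaranteed by the normal equations above together with the invertibility of $\Sigma_{T_{p+1}}$ assumed in the statement.
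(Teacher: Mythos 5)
Your proposal is correct and follows essentially the same route as the paper: take Proposition \ref{Prop:IterativeQ0Matrices} as given and identify the increment termwise, computing $\mbox{cov}(Y-Y_{T_p},X(t_{p+1}))=c_{p+1}-c_{T_p}'\Sigma_{T_p}^{-1}c_{T_p,p+1}$ and $\mbox{var}(X(t_{p+1})-X_{T_p}(t_{p+1}))=\sigma_{p+1}^2-c_{T_p,p+1}'\Sigma_{T_p}^{-1}c_{T_p,p+1}$ via the projection coefficients $\Sigma_{T_p}^{-1}c_{T_p}$ and $\Sigma_{T_p}^{-1}c_{T_p,p+1}$, with the sign absorbed by the square. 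Your explicit derivation of these coefficients from the normal equations, and your use of the Pythagorean identity in place of the paper's direct three-term variance expansion, are only cosmetic differences.
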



The quotient of Equation \eqref{Eq:IterativeQ0} is known as \textit{part correlation coefficient} or \textit{semi-partial correlation coefficient}, a quantity which appears in several techniques dealing with multivariate data. The proof of this result can be found in Subsection \ref{Sec:ProofIterativeQ0Vars}.

\section{Sample properties of the variable selection method}
\label{Sec:SampleProperties}

\subsection{The proposed method}\label{Subsec:themethod}

In order to carry out the variable selection in practice, we have to estimate the function $Q_0$ from a sample $(Y_1,X_1), \ldots, (Y_n, X_n)$ of independent observations drawn from the model \eqref{Eq:RKHSmodel}. The most natural estimator is given by $\widehat{Q}_n(T_p) = \widehat{c}_{T_{p}}'\widehat{\Sigma}_{T_{p}}^{-1}\widehat{c}_{T_{p}}$, where $\widehat{c}_{T_{p}}$ and $\widehat{\Sigma}_{T_{p}}$ are the sample versions of $c_{T_p}$ and $\Sigma_{T_p}$, respectively, based on the sample mean $\overline{X}(t)= n^{-1}\sum_{i=1}^n X_i(t)$ and the sample covariances 
\[
\widehat{\mbox{Cov}}(X(s),X(t)) = n^{-1}\sum_{i=1}^n X_i(s)X_i(t) - \overline{X}(s)\overline{X}(t)
\]
of the trajectories. Then, if we want to select $p$ variables we propose to use $\widehat{T}_{p,n}$, where
\begin{equation}
	\label{Eq:EstimT}
	\widehat{T}_{p,n} :=  \argmax_{T_{p}\in \Theta_{p}} \, \widehat{Q}_n(T_{p}) =   \argmax_{T_{p}\in \Theta_{p}} \, \widehat{c}_{T_{p}}'\widehat{\Sigma}_{T_{p}}^{-1}\widehat{c}_{T_{p}}.
\end{equation}

\subsection{Asymptotic results}\label{Subsec:asymptotics}

In the following results, we will analyze the asymptotic behavior of this proposal.  We start with three preliminary results that may be of some interest by themselves. First we prove that, under some moment conditions,  the sample mean and  covariance functions of $X$ converge uniformly  a.s. to their population counterparts:

\begin{lemma} 
	\label{Lemma:uniformidad}
	Assume that the process $X$ has continuous trajectories with continuous mean and covariance functions and that it fulfils that  $\mathbb{E}\big[\sup_{t\in [\delta,1]} X(t)^2\big]<\infty $, for a certain $\delta\geq 0$. Then,
	\begin{equation}
		\label{eq:uniformidad-covarianza}
		\sup_{s, t\in [\delta,1]} |\widehat{\mbox{Cov}}(X(s),X(t)) -  \mbox{Cov}(X(s),X(t))| \overset{a.s.}{\rightarrow} 0.
	\end{equation}
\end{lemma}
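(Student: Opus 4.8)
The plan is to recognize the statement as a \emph{uniform strong law of large numbers} and reduce it to the strong law in a separable Banach space. Writing $K(s,t)=\mathbb{E}[X(s)X(t)]-m(s)m(t)$, I would first decompose the error as
\[
\widehat{\mbox{Cov}}(X(s),X(t))-K(s,t) = \Big(\tfrac1n\sum_{i=1}^n X_i(s)X_i(t)-\mathbb{E}[X(s)X(t)]\Big) - \big(\overline{X}(s)\overline{X}(t)-m(s)m(t)\big),
\]
and then handle the two bracketed terms separately, each via Mourier's strong law for i.i.d.\ Bochner-integrable random elements of a separable Banach space.

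For the first term, I would view $Z_i(s,t):=X_i(s)X_i(t)$ as i.i.d.\ random elements of the separable Banach space $C([\delta,1]^2)$ (they are continuous because the trajectories are, and separability guarantees Bochner measurability). The key observation is that
\[
\|Z_i\|_\infty=\sup_{s,t\in[\delta,1]}|X_i(s)X_i(t)|=\Big(\sup_{t\in[\delta,1]}|X_i(t)|\Big)^2=\sup_{t\in[\delta,1]}X_i(t)^2,
\]
so the hypothesis $\mathbb{E}[\sup_t X(t)^2]<\infty$ is \emph{exactly} the integrability condition $\mathbb{E}\|Z\|_\infty<\infty$ needed by the theorem. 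It then yields $\tfrac1n\sum_i Z_i\to\mathbb{E}[Z]$ almost surely in sup-norm, i.e.\ uniform convergence on $[\delta,1]^2$.

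For the centering term I would apply the same Banach-space strong law to $X_i\in C[\delta,1]$; here $\mathbb{E}\|X\|_\infty=\mathbb{E}\sup_t|X(t)|\le(\mathbb{E}\sup_t X(t)^2)^{1/2}<\infty$ by Cauchy--Schwarz, so $\overline{X}\to m$ uniformly almost surely. I would then bound
\[
\sup_{s,t}|\overline{X}(s)\overline{X}(t)-m(s)m(t)| \le \|\overline{X}-m\|_\infty\,\|\overline{X}\|_\infty + \|m\|_\infty\,\|\overline{X}-m\|_\infty,
\]
which tends to $0$ a.s.\ because $\|\overline{X}-m\|_\infty\to0$, $\|m\|_\infty<\infty$ by continuity on a compact interval, and $\|\overline{X}\|_\infty$ is therefore eventually bounded. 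Combining the two pieces by the triangle inequality gives the claim.

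The main obstacle is arranging this nonlinear statistic into a genuine i.i.d.\ average so that a law of large numbers applies: the sample covariance is not itself an average, which is why the centering by $\overline{X}(s)\overline{X}(t)$ must be peeled off and treated on its own. The only genuinely substantive point — as opposed to routine bookkeeping — is recognizing that the bivariate product process lives in $C([\delta,1]^2)$ with \emph{integrable supremum precisely under the stated moment assumption}; everything else reduces to the Banach-space strong law plus elementary estimates. If one prefers to avoid the abstract result, an alternative is a direct argument combining the pointwise strong law with stochastic equicontinuity/tightness derived from the continuity of the trajectories, but this is considerably more laborious and the integrable-envelope route is the cleanest.
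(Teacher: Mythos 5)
Your proof is correct and follows essentially the same route as the paper: both apply Mourier's strong law of large numbers in the separable Banach spaces $C[\delta,1]$ and $C([\delta,1]^2)$, using the hypothesis $\mathbb{E}\big[\sup_t X(t)^2\big]<\infty$ to verify the integrable envelope for the product process, and then combine the two uniform limits to handle the sample covariance. The only cosmetic difference is that you identify $\sup_{s,t}|X(s)X(t)|=\big(\sup_t|X(t)|\big)^2$ directly, where the paper uses the bound $2|X(s)X(t)|\le X(s)^2+X(t)^2$, and you spell out the elementary estimate for the centering term $\overline{X}(s)\overline{X}(t)-m(s)m(t)$ that the paper leaves implicit.
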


\begin{proof}
	Note that the assumption implies $\mathbb{E}\big[\sup_{t\in [\delta,1]} |X(t)|\big]<\infty $ and the stochastic process $\{X(t): t\in [\delta,1]\}$ has finite strong expectation with trajectories in $\mathcal{C}[\delta, 1]$, which is a separable Banach space. Then, we can apply Mourier's SLLN (see e.g. Theorem 4.5.2 of \cite{laha1979}, p. 452) to conclude
	\begin{equation}
	\label{eq:uniformidad-media}
	\sup_{t\in [\delta,1]} |\overline{X}(t) -  m(t)| \overset{a.s.}{\rightarrow} 0,
	\end{equation}
	 Similarly, the process  $Z(s,t):=X(s)X(t)$, with trajectories in $\mathcal{C}([\delta,1]^2)$, is such that its strong expectation exists. Indeed, since
	\[
	0\leq (|X(s)| - |X(t)|)^2 = |X(s)|^2 + |X(t)|^2 - 2 |Z(s,t)|,
	\]
	it holds
	\[
	\mathbb{E}\big(\sup_{s,t\in [\delta,1]} |Z(s,t)|\big) \leq \mathbb{E}\big(\sup_{t\in [\delta,1]}|X(t)|^2\big)  < \infty.
	\]
	Moreover, $\mathcal{C}([\delta,1]^2)$ is separable since $[\delta,1]^2$ is compact.  Then,  Mourier's SLLN  and \eqref{eq:uniformidad-media} imply \eqref{eq:uniformidad-covarianza}.
\end{proof}

Next, we prove  that both $\widehat{Q}_n$ and $Q_0$ are continuous functions for any $p\geq 1$:

\begin{lemma} 
	\label{Lemma:continuidad}
	Assume that the process $X(t)$ has continuous mean and covariance functions. Let $p\geq 1$ and $\Theta_p=\Theta_p(\delta)$ be such that the assumptions of Lemma \ref{Lemma:uniformidad} hold. In addition, asssume that the covariance matrix $\Sigma_{T_p}$ is invertible for all $T_p\in \Theta_p$. Then, the functions $\widehat{Q}_n$ and $Q_0$ are continuous on $\Theta_p$.
\end{lemma}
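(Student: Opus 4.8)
The plan is to exhibit both $Q_0$ and $\widehat{Q}_n$ as compositions of continuous maps of the argument $T_p=(t_1,\ldots,t_p)$, so that continuity on $\Theta_p$ follows by composing. Two ingredients are needed: first, that the entries of the matrices $\Sigma_{T_p},\widehat{\Sigma}_{T_p}$ and of the vectors $c_{T_p},\widehat{c}_{T_p}$ depend continuously on $T_p$; and second, that the map $(v,A)\mapsto v'A^{-1}v$ is continuous on the set of pairs with $A$ invertible. The latter is standard, since by Cramer's rule every entry of $A^{-1}$ is a rational function of the entries of $A$ with denominator $\det A\neq 0$.

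I would first treat the population function $Q_0$. The $(i,j)$ entry of $\Sigma_{T_p}$ is $K(t_i,t_j)$, continuous in $(t_i,t_j)$ by hypothesis. For the vector $c_{T_p}$, whose $i$-th entry is $\mbox{Cov}(X(t_i),Y)$, I would use Cauchy--Schwarz: $|\mbox{Cov}(X(t),Y)-\mbox{Cov}(X(s),Y)|\le \sqrt{\mbox{Var}(Y)}\,\sqrt{\mbox{Var}(X(t)-X(s))}$, and $\mbox{Var}(X(t)-X(s))=K(t,t)-2K(s,t)+K(s,s)\to 0$ as $s\to t$ by continuity of $K$; hence each component of $c_{T_p}$ is continuous. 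Thus $T_p\mapsto(\Sigma_{T_p},c_{T_p})$ is continuous. Since $\Theta_p$ is compact and $\Sigma_{T_p}$ is invertible for every $T_p\in\Theta_p$, the continuous function $T_p\mapsto \det\Sigma_{T_p}$ is bounded away from zero on $\Theta_p$; therefore $T_p\mapsto\Sigma_{T_p}^{-1}$ is continuous and $Q_0(T_p)=c'_{T_p}\Sigma_{T_p}^{-1}c_{T_p}$ is continuous as a composition of continuous maps.

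For $\widehat{Q}_n$, the entries of $\widehat{\Sigma}_{T_p}$ and $\widehat{c}_{T_p}$ are finite sums of products of the continuous trajectories $X_i$ (and of the $Y_i$), hence continuous functions of $T_p$ for every fixed sample. The only additional point is the invertibility of $\widehat{\Sigma}_{T_p}$ on the whole of $\Theta_p$, which does not follow from the assumption on the population matrix alone. Here I would invoke Lemma \ref{Lemma:uniformidad}: since $\det$ is a polynomial in the matrix entries and the sample covariances converge uniformly to their population counterparts, $\sup_{T_p\in\Theta_p}|\det\widehat{\Sigma}_{T_p}-\det\Sigma_{T_p}|\to 0$ almost surely; combined with the lower bound on $\det\Sigma_{T_p}$ obtained above, this yields that, almost surely, for all $n$ large enough $\det\widehat{\Sigma}_{T_p}$ is bounded away from zero uniformly in $T_p$. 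On that event, and for each such $n$, $\widehat{Q}_n$ is the same continuous composition as $Q_0$.

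The main obstacle is precisely this last point: guaranteeing that $\widehat{\Sigma}_{T_p}$ is invertible simultaneously for all $T_p\in\Theta_p$, which is what forces the use of the uniform strong law of Lemma \ref{Lemma:uniformidad} rather than a pointwise argument. The remaining steps are soft continuity arguments resting on the continuity of $K$ and of the trajectories together with the compactness of $\Theta_p$.
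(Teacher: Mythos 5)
Your proof is correct and follows essentially the same route as the paper: continuity of the entries of $c_{T_p}$ and $\Sigma_{T_p}$, the adjugate/Cramer representation of $\Sigma_{T_p}^{-1}$ together with compactness of $\Theta_p$ to bound $\det\Sigma_{T_p}$ away from zero, and the uniform convergence of Lemma \ref{Lemma:uniformidad} to get $\inf_{T_p\in\Theta_p}\det\widehat{\Sigma}_{T_p}>0$ almost surely for large $n$. Your explicit Cauchy--Schwarz bound for the continuity of $t\mapsto\mbox{Cov}(X(t),Y)$ is a nice filling-in of a step the paper only asserts.
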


\begin{proof}
	Fix $p\geq 1$. First, we prove that $Q_0$ is continuous. Since  the process $X(t)$ has continuous mean and covariance functions we have that 
	\[
	c_{T_p}  = (\mbox{Cov}(X(t_1), Y),\ldots,\mbox{Cov}(X(t_p), Y))'
	\]
	is continuous on $\Theta_p$. On the other hand, since the entries of $\Sigma_{T_p}$ are continuous on $[0,1]^2$, $\mbox{det}(\Sigma_{T_p})$ is also continuous on $\Theta_p$, where $\mbox{det}(\Sigma)$ stands for the determinant of $\Sigma$.  By assumption, $\mbox{det}(\Sigma_{T_p})>0$ for all $T_p\in\Theta_p$. Since $\Theta_p$ is compact, $
	\inf_{T_p\in\Theta_p} \mbox{det}(\Sigma_{T_p})>0$. Observe that
	\[
	\Sigma_{T_p}^{-1} = \frac{\mbox{adj}(\Sigma_{T_p})}{\mbox{det}(\Sigma_{T_p})},
	\]
	where $\mbox{adj}(\Sigma)$ denotes the adjugate of $\Sigma$. As a consequence, the entries of $\Sigma_{T_p}^{-1}$ are continuous on $\Theta_p$, and hence the function $Q_0$ is also continuous.
	
	The proof for $\widehat{Q}_n$ is analogous with the only difference that in this case we must ensure that $\inf_{T_p\in\Theta_p} \mbox{det}(\widehat{\Sigma}_{T_p})>0$ with probability 1. Notice that  for all $n\geq 1$ and $T_p\in \Theta_p$, $\mbox{det}(\widehat{\Sigma}_{T_p})>0$ a.s. On the other hand, from  \eqref{eq:uniformidad-covarianza} it follows that
	\begin{equation}
		\label{eq:determinant}
		\sup_{T_p\in \Theta_p}| \mbox{det}(\widehat{\Sigma}_{T_p}) - \mbox{det}(\Sigma_{T_p}) |\overset{a.s}{\rightarrow} 0.
	\end{equation}
	We have seen before that $\inf_{T_p\in\Theta_p} \mbox{det}(\Sigma_{T_p})>0$. Then, with probability 1, there exists $n_0$ such that if $n\geq n_0$, $\inf_{T_p\in\Theta_p} \mbox{det}(\widehat{\Sigma}_{T_p})>0$. 
\end{proof}

The two previous lemmas allow us to prove the uniform convergence on $\Theta_p$ of the empirical criterion for variable selection to the theoretical one.

\begin{lemma}
	\label{Lemma:covergenciauniforme}
	Under the assumptions of Lemma \ref{Lemma:continuidad}, it holds that 
	\[
	\sup_{T_p\in\Theta_p} |\widehat{Q}_n(T_{p}) - Q_0(T_p)| \overset{a.s.}{\rightarrow} 0.
	\]
\end{lemma}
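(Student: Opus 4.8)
The plan is to reduce the statement to the uniform almost-sure convergence of the three building blocks $\widehat{c}_{T_p}$, $\widehat{\Sigma}_{T_p}$ and $\widehat{\Sigma}_{T_p}^{-1}$, and then to exploit that on the compact set $\Theta_p$ the map $(c,\Sigma)\mapsto c'\Sigma^{-1}c$ is, uniformly, a well-behaved (bounded, Lipschitz) function of its arguments. Throughout I drop the subscript $T_p$ and work with an arbitrary matrix/vector norm, all being equivalent in finite dimension.

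First I would collect the required uniform convergences. Lemma \ref{Lemma:uniformidad} already gives, via \eqref{eq:uniformidad-covarianza}, that $\sup_{T_p\in\Theta_p}\|\widehat{\Sigma}_{T_p}-\Sigma_{T_p}\|\to 0$ a.s. For the cross-covariance vector I would run the same argument as in Lemma \ref{Lemma:uniformidad}: the $\mathcal{C}[\delta,1]$-valued process $t\mapsto X(t)Y$ has finite strong expectation, since by Cauchy--Schwarz $\mathbb{E}\big(\sup_{t\in[\delta,1]}|X(t)Y|\big)\le \big(\mathbb{E}[\sup_t X(t)^2]\big)^{1/2}\big(\mathbb{E}[Y^2]\big)^{1/2}<\infty$, so Mourier's SLLN applies; combining it with $\overline{X}(t)\to m(t)$ uniformly (Equation \eqref{eq:uniformidad-media}) and $\overline{Y}\to\mathbb{E}[Y]$ a.s. yields $\sup_t|\widehat{\mbox{Cov}}(X(t),Y)-\mbox{Cov}(X(t),Y)|\to 0$ a.s., hence $\sup_{T_p\in\Theta_p}\|\widehat{c}_{T_p}-c_{T_p}\|\to 0$ a.s.

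Second, I would upgrade the convergence of $\widehat{\Sigma}_{T_p}$ to that of its inverse by reusing the machinery already developed in the proof of Lemma \ref{Lemma:continuidad}. Through the adjugate/determinant representation, and because $\inf_{T_p\in\Theta_p}\det(\Sigma_{T_p})>0$ (invertibility assumption plus compactness of $\Theta_p$) while, by \eqref{eq:determinant}, $\inf_{T_p\in\Theta_p}\det(\widehat{\Sigma}_{T_p})>0$ for $n\ge n_0$ a.s., the entries of $\widehat{\Sigma}_{T_p}^{-1}$ converge uniformly to those of $\Sigma_{T_p}^{-1}$. In particular $\sup_{T_p\in\Theta_p}\|\widehat{\Sigma}_{T_p}^{-1}-\Sigma_{T_p}^{-1}\|\to 0$ a.s., and both $\sup_{T_p}\|\Sigma_{T_p}^{-1}\|$ and, eventually a.s., $\sup_{T_p}\|\widehat{\Sigma}_{T_p}^{-1}\|$ are bounded by a common constant; similarly $\sup_{T_p}\|c_{T_p}\|$ and eventually $\sup_{T_p}\|\widehat{c}_{T_p}\|$ are uniformly bounded.

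Finally I would conclude with the telescoping decomposition
\[
\widehat{c}'\widehat{\Sigma}^{-1}\widehat{c}-c'\Sigma^{-1}c=(\widehat{c}-c)'\widehat{\Sigma}^{-1}\widehat{c}+c'(\widehat{\Sigma}^{-1}-\Sigma^{-1})\widehat{c}+c'\Sigma^{-1}(\widehat{c}-c).
\]
Each of the three terms is bounded (by Cauchy--Schwarz for the quadratic forms) by a product of a uniform bound on $\|c\|,\|\widehat{c}\|,\|\Sigma^{-1}\|,\|\widehat{\Sigma}^{-1}\|$ and a uniformly vanishing factor, namely $\sup_{T_p}\|\widehat{c}-c\|$ or $\sup_{T_p}\|\widehat{\Sigma}^{-1}-\Sigma^{-1}\|$; hence each tends to $0$ uniformly over $\Theta_p$ a.s., and taking the supremum over $T_p\in\Theta_p$ gives the claim. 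I expect the genuine obstacle to be Step two, the uniform control of the inverse matrices: one must keep the determinant bounded away from zero \emph{uniformly} over $\Theta_p$, both at the population level and eventually a.s. at the sample level, which is exactly where compactness of $\Theta_p$ and the invertibility hypothesis are indispensable; once that uniform lower bound is secured, the remaining estimates are routine.
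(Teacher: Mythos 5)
Your proof is correct and follows essentially the same route as the paper: uniform a.s.\ convergence of $\widehat{c}_{T_p}$ via Mourier's SLLN applied to $X(t)Y$ together with \eqref{eq:uniformidad-media}, uniform convergence of $\widehat{\Sigma}_{T_p}^{-1}$ through the adjugate/determinant representation with the determinant bounded away from zero on the compact $\Theta_p$, and then the (routine) combination of the two. The only difference is that you spell out explicitly, via the telescoping decomposition and Cauchy--Schwarz, the final step that the paper summarizes as ``it is enough to establish the uniform convergence of the coordinates and entries.''
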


\begin{proof}
	It is enough to establish the uniform convergence a.s. of the coordinates of $\widehat{c}_{T_p}$ and the entries of $\widehat{\Sigma}^{-1}_{T_p}$ to those of $c_{T_p}$ and $\Sigma^{-1}_{T_p}$ respectively. 
	
	Equation \eqref{eq:uniformidad-media} and the same argument leading to \eqref{eq:uniformidad-media} but applied to the process $Z(t)=X(t)Y$ yield
	\begin{equation}
		\label{Eq:uniformidad-covarianzaXY}
		\sup_{t\in[\delta,1]} \left|\frac{1}{n}\sum_{i=1}^n (X_i(t)-\overline{X}(t))(Y_i-\overline{Y}) - \mbox{Cov}(X(t),Y)\right| \overset{a.s}\rightarrow 0,
	\end{equation}
	and hence the uniform convergence a.s. of the coordinates of $\widehat{c}_{T_p}$  to those of $c_{T_p}$.
	
	Finally, observe that 
	$
	\widehat{\Sigma}_{T_p}^{-1} = \frac{\mbox{adj}(\widehat{\Sigma}_{T_p})}{\mbox{det}(\widehat{\Sigma}_{T_p})}.
	$
	Then, from \eqref{eq:uniformidad-covarianza}, \eqref{eq:determinant} and $\inf_{T_p\in\Theta_p} \mbox{det}(\Sigma_{T_p})>0$ we conclude the uniform convergence a.s. of the  entries of $\widehat{\Sigma}^{-1}_{T_p}$ to those of $\Sigma^{-1}_{T_p}$.
\end{proof}

Now assume that the sparsity condition \eqref{Eq:Sparse} holds. Then,  $T^*_{p^*}=(t^*_1,\ldots,t^*_{p^*})\in\Theta_{p^*}$ is ``sufficient'' in the sense that the response  only depends on the regressor variable through the values $X(t_1^*),\ldots,X(t^*_{p*})$. We have already seen that $T^*_{p^*}$  is a global maximum of $Q_0$ (see the remark below Equation \eqref{Eq:Sparse}). In fact, we are going to prove that under mild conditions it is the only global maximum of $Q_0$ on $\Theta_{p^*}$ and that  the estimator $\widehat{T}_{p^*,n}$ (defined in \eqref{Eq:EstimT} with $p=p^*$) converges a.s. to $T^*_{p^*}$. Then, our proposal is able to identify consistently the true relevant points.

\begin{theorem}
	\label{Theorem:ConsistencyOfT}
	Assume \eqref{Eq:Sparse} holds and that the process $X(t)$ has continuous mean and covariance functions. Suppose also that  the assumptions of Lemma \ref{Lemma:uniformidad} hold for $p=p^*$, the covariance matrix $\Sigma_{T_{p^*}}$ is invertible for all $T_{p^*}\in \Theta_{p^*}$ and the covariance matrix $\Sigma_{T_{p^*}\cup S_{p^*}}$ is invertible for all $T_{p^*}, S_{p^*}\in \Theta_{p^*}$, with $T_{p^*}\neq S_{p^*}$. Then,
	\begin{enumerate}
		\item[(a)]   The point $T^*_{p^*}\in \Theta_{p^*}$, given by \eqref{Eq:Sparse}, is the only global maximum of $Q_0$ on $\Theta_{p^*}$.
		\item[(b)] If $\widehat{T}_{p^*,n}=\argmax_{T_{p^*}\in \Theta_{p^*}} \, \widehat{Q}_n(T_{p^*})$, then $\widehat{T}_{p^*,n} \to T^*_{p^*}$ a.s. as $n\to\infty$.
		\item[(c)] $\widehat{T}_{p^*,n}$ converges to $T^*_{p^*}$ in quadratic mean, that is, $\mathbb{E}\big(\|\widehat{T}_{p^*,n}- T^*_{p^*}\|^2_2\big) \to 0$, as $n\to\infty$.
	\end{enumerate}
\end{theorem}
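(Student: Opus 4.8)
The plan is to treat the three claims in order, leaning on the machinery already in place. For part (a), I would start from the identity $Q_2(T_p)=\|\beta\|^2_K - Q_0(T_p)$ established in the proof of Proposition \ref{Prop:Equivalence}, so that maximizing $Q_0$ is the same as minimizing $Q_2$. Under \eqref{Eq:Sparse} we already know $Q_2(T^*_{p^*})=0$, which is the smallest possible value, so $T^*_{p^*}$ is a global maximizer of $Q_0$; the real content of (a) is uniqueness. Suppose $S_{p^*}\in\Theta_{p^*}$ is another global maximizer, so $Q_2(S_{p^*})=0$ as well. By \eqref{eq.auxequivalence}, $Q_2(S_{p^*})=0$ forces a representation $\beta=\sum_{k=1}^{p^*}\gamma_k K(s_k,\cdot)$, and combining with \eqref{Eq:Sparse} gives
\[
\sum_{j=1}^{p^*}\beta_j K(t_j^*,\cdot)\;=\;\sum_{k=1}^{p^*}\gamma_k K(s_k,\cdot).
\]
Here I would invoke the hypothesis that $\Sigma_{T_{p^*}\cup S_{p^*}}$ is invertible: since $\langle K(u,\cdot),K(v,\cdot)\rangle_K=K(u,v)$, this covariance matrix is exactly the Gram matrix of the kernel sections indexed by the distinct points of $T^*_{p^*}\cup S_{p^*}$, so those sections are linearly independent in $\mathcal{H}(K)$. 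If $T^*_{p^*}$ and $S_{p^*}$ differed in some coordinate, say $t_1^*\notin S_{p^*}$, then $K(t_1^*,\cdot)$ would appear on one side only with coefficient $\beta_1\neq 0$, contradicting linear independence. Hence $T^*_{p^*}=S_{p^*}$, which proves (a).

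For part (b) I would use the standard argmax-consistency scheme, whose ingredients are all available: $\Theta_{p^*}$ is compact; $Q_0$ is continuous with a unique maximizer at $T^*_{p^*}$ by (a); and $\widehat{Q}_n\to Q_0$ uniformly on $\Theta_{p^*}$ almost surely by Lemma \ref{Lemma:covergenciauniforme}. On the probability-one event where the uniform convergence holds, I would argue by contradiction along subsequences: if $\widehat{T}_{p^*,n}$ failed to converge to $T^*_{p^*}$, compactness would produce a subsequence $\widehat{T}_{p^*,n_k}\to T'$ with $T'\neq T^*_{p^*}$. Starting from the defining inequality $\widehat{Q}_{n}(\widehat{T}_{p^*,n})\geq\widehat{Q}_{n}(T^*_{p^*})$, adding and subtracting $Q_0$ and using uniform convergence gives $\liminf_n Q_0(\widehat{T}_{p^*,n})\geq Q_0(T^*_{p^*})$; continuity of $Q_0$ along the subsequence then yields $Q_0(T')\geq Q_0(T^*_{p^*})$, contradicting uniqueness. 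This establishes $\widehat{T}_{p^*,n}\to T^*_{p^*}$ a.s.

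Finally, part (c) follows from (b) essentially for free. Since both $\widehat{T}_{p^*,n}$ and $T^*_{p^*}$ lie in $\Theta_{p^*}\subseteq[0,1]^{p^*}$, the quantity $\|\widehat{T}_{p^*,n}-T^*_{p^*}\|_2^2$ is bounded above by the constant $p^*$. The almost-sure convergence from (b) then upgrades to convergence in quadratic mean by the bounded (dominated) convergence theorem, giving $\mathbb{E}\big(\|\widehat{T}_{p^*,n}-T^*_{p^*}\|_2^2\big)\to 0$.

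I expect the main obstacle to be the uniqueness argument in (a); parts (b) and (c) are routine given the earlier lemmas, compactness, and boundedness. The delicate point in (a) is correctly translating ``two $p^*$-sparse representations of the same $\beta$'' into linear independence of kernel sections via the invertibility of $\Sigma_{T_{p^*}\cup S_{p^*}}$, together with the implicit (but essential) fact that the true coefficients $\beta_j$ in \eqref{Eq:Sparse} are nonzero, which is what rules out any mismatch between $T^*_{p^*}$ and a competing maximizer.
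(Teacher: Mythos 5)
Your proposal is correct and is essentially the paper's proof: parts (b) and (c) coincide with it (argmax consistency from Lemma \ref{Lemma:continuidad}, Lemma \ref{Lemma:covergenciauniforme}, compactness of $\Theta_{p^*}$ and the uniqueness from part (a), followed by dominated convergence with a constant bound), and your uniqueness argument in (a) is the paper's argument transported through Lo\`eve's isometry --- the paper derives from $Q_1$ a vanishing linear combination of the centered variables $\widetilde{X}(u)$, $u\in T^*_{p^*}\cup S_{p^*}$, contradicting the invertibility of $\Sigma_{T^*_{p^*}\cup S_{p^*}}$, whereas you derive from $Q_2$ a vanishing linear combination of the kernel sections $K(u,\cdot)$, whose Gram matrix is that same matrix $\Sigma_{T^*_{p^*}\cup S_{p^*}}$. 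Your closing caveat that the coefficients $\beta_j$ in \eqref{Eq:Sparse} must be nonzero (equivalently, that $p^*$ is minimal) is likewise implicit in the paper's own argument, so it does not constitute a gap in your proof relative to theirs.
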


\begin{proof} (a)  In view of \eqref{Eq:Equivalence}, it is enough to prove that $T^*:= T^*_{p^*}$ is the unique global minimum of 
	\[
	Q_1(T_{p^*})=\Vert Y - Y_{T_{p^*}}\Vert_2^2=\Vert Y_{T^*} - Y_{T_{p^*}}\Vert_2^2 + \mbox{Var}(\varepsilon).
	\]  
	The expression above readily shows that $T^*$ minimizes $Q_1$. Suppose that there exists another minimum $S^* \in \Theta_{p^*}$ such that $S^*\neq T^*$. Then, we must have $\Vert Y_{T^*} - Y_{S^*}\Vert_2^2 =0$ and hence $Y_{T^*} - Y_{S^*}=0$ a.s. As a consequence, using the notation $\widetilde{X}(t) = X(t) - m(t)$, there  exist coefficients $\beta_j$ and $\alpha_j$ such that $\sum_{j=1}^{p^*} \beta_j \widetilde{X}(t_j^*) - \sum_{j=1}^{p^*} \alpha_j \widetilde{X}(s_j^*)=0$ a.s.,  for  $T^*, S^* \in \Theta_{p^*}$ with $S^*\neq T^*$. This fact contradicts the assumption that the  covariance matrix $\Sigma_{T^*\cup S^*}$ must be invertible. Therefore, $T^*=S^*$.
	
	(b) Since the functions $\widehat{Q}_n$ and $Q_0$ are continuous on $\Theta_{p^*}$ (by Lemma \ref{Lemma:continuidad}) and the sequence of functions $\widehat{Q}_n$ tends uniformly a.s. to $Q_0$ on $\Theta_{p^*}$ (by Lemma \ref{Lemma:covergenciauniforme}) the fact that $Q_0$ has a unique maximum on $\Theta_{p^*}$(part (a)) implies that $\widehat{T}_{p^*,n}$  converges almost surely to $T^*_{p^*}$. 
	
	(c) From part (b), we have $\|\widehat{T}_{p^*,n}- T^*_{p^*}\|_2\to 0$ a.s. as $n\to\infty$. Moreover, since both $\widehat{T}_{p^*,n}$ and $T^*_{p^*}$ belong to $\Theta_{p^*}$,
	\[
	\|\widehat{T}_{p^*,n}- T^*_{p^*}\|_2\leq \|\widehat{T}_{p^*,n}\|_2 + \|T^*_{p^*}\|_2 \leq 2p^*.
	\]
	The result follows from  dominated convergence theorem (using $2p^*$ as the integrable dominating function).
\end{proof}

Once we have selected $p^*$ points, we can use them to predict the response variable. The optimal predictions (in a square mean sense) are given by:
\[
\widehat{Y}_{\widehat{T}_{p^*}} = \widehat{\beta}_1 \widetilde{X}(\widehat{t}_1) + \cdots + \widehat{\beta}_{p^*} \widetilde{X}(\widehat{t}_{p^*}),
\]
where $\widetilde{X}(t) = X(t) -m(t)$, and $\widehat{\beta}_{\widehat{T}_{p^*}} := (\widehat{\beta}_1,\ldots,\widehat{\beta}_{p^*})' = \widehat{\Sigma}_{\widehat{T}_{p^*}}^{-1}\widehat{c}_{\widehat{T}_{p^*}}$. On the other hand, the prediction we would use under condition \eqref{Eq:Sparse}
if we knew the true relevant points and the true values of the parameters of the model would be
\[
Y_{T^*} = \beta^*_1 \widetilde{X}(t^*_1) + \cdots + \beta^*_{p^*} \widetilde{X}(t^*_{p^*}),
\]
where now $\beta^*_{T^*_{p^*}}:=(\beta^*_1,\ldots,\beta^*_{p^*})'= \Sigma^{-1}_{T^*_{p^*}} c_{T^*_{p^*}}$.
The following result refers to the asymptotic behavior of the data-driven predictions $\widehat{Y}_{\widehat{T}_{p^*}}$. It is shown that they converge a.s. and in quadratic mean to the oracle values $Y_{T^*}$.

\begin{theorem}
	\label{Theorem:ConsistencyOfY}
	Under the assumptions of Theorem \ref{Theorem:ConsistencyOfT}, $\widehat{Y}_{\widehat{T}_{p^*}} \overset{a.s.}{\rightarrow} Y_{T^*}$. If, in addition, there exists $\eta>0$ such that $\mathbb{E}\big[\sup_{t\in [\delta,1]} |X(t)|^{2+\eta}\big]<\infty $ then  $\widehat{Y}_{\widehat{T}_{p^*}} \overset{L^2}{\longrightarrow} Y_{T^*}$, as $n\to\infty$.
\end{theorem}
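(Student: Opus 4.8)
The plan is to handle the two modes of convergence separately: first establish the almost sure statement, then upgrade it to $L^2$ by a uniform integrability argument built on the extra moment hypothesis.

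For the almost sure part, I would write
\[
\widehat{Y}_{\widehat{T}_{p^*}} - Y_{T^*} = \sum_{j=1}^{p^*}\widehat{\beta}_j\widetilde{X}(\widehat{t}_j) - \sum_{j=1}^{p^*}\beta^*_j\widetilde{X}(t^*_j)
\]
and argue that both the coefficients and the evaluated trajectories converge. Theorem \ref{Theorem:ConsistencyOfT}(b) already gives $\widehat{T}_{p^*,n}\to T^*$ a.s., hence $\widehat{t}_j\to t^*_j$ a.s. To control the coefficients I would introduce $g(T)=\Sigma_T^{-1}c_T$ and $\widehat{g}_n(T)=\widehat{\Sigma}_T^{-1}\widehat{c}_T$ on $\Theta_{p^*}$: the arguments of Lemma \ref{Lemma:covergenciauniforme} give $\sup_{T\in\Theta_{p^*}}\|\widehat{g}_n(T)-g(T)\|\to 0$ a.s., while $g$ is continuous on the compact $\Theta_{p^*}$ as in Lemma \ref{Lemma:continuidad}. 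Splitting $\widehat{\beta}-\beta^*=[\widehat{g}_n(\widehat{T})-g(\widehat{T})]+[g(\widehat{T})-g(T^*)]$ and bounding the first bracket by the uniform norm and the second by continuity together with $\widehat{T}\to T^*$ yields $\widehat{\beta}\to\beta^*$ a.s. Finally, since $X$ has continuous trajectories a.s., $\widetilde{X}(\widehat{t}_j)\to\widetilde{X}(t^*_j)$ a.s.; combining the three convergences gives $\widehat{Y}_{\widehat{T}_{p^*}}\to Y_{T^*}$ a.s.

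For the $L^2$ part, the strategy is to show that $\{(\widehat{Y}_{\widehat{T}_{p^*}}-Y_{T^*})^2\}_n$ is uniformly integrable, which together with the a.s. (hence in probability) convergence yields $L^2$ convergence. Using that every coordinate of any $T\in\Theta_{p^*}$ lies in $[\delta,1]$, the natural domination is
\[
|\widehat{Y}_{\widehat{T}_{p^*}} - Y_{T^*}| \le \Big(\sup_{t\in[\delta,1]}|X(t)| + \sup_{t\in[\delta,1]}|m(t)|\Big)\big(\|\widehat{\beta}\|_1 + \|\beta^*\|_1\big),
\]
where the first factor has a finite $(2+\eta)$-th moment by the extra hypothesis and $m$ is bounded on the compact $[\delta,1]$. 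It then suffices to obtain a uniform bound $\sup_n\mathbb{E}\|\widehat{\beta}\|^{2+\eta}<\infty$ (treating the generic trajectory $X$ as independent of the sample-measurable estimators so the expectation factors), since boundedness of the sequence in $L^{2+\eta}$ forces uniform integrability of the squares by de la Vallée Poussin.

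I expect the genuine obstacle to be exactly this uniform-in-$n$ moment control of $\widehat{\beta}=\widehat{\Sigma}_{\widehat{T}}^{-1}\widehat{c}_{\widehat{T}}$. The coordinates of $\widehat{c}_T$ are sample covariances with well-behaved moments, but the inverse sample covariance $\widehat{\Sigma}_{\widehat{T}}^{-1}$ can be arbitrarily large for small $n$, since $\widehat{\Sigma}_T$ may be nearly singular for some $T\in\Theta_{p^*}$; hence $\widehat{\beta}$ is not bounded by a deterministic constant uniformly in $n$. I would close this gap by exploiting that the population determinant is uniformly bounded below, $\inf_{T\in\Theta_{p^*}}\det\Sigma_T>0$ (compactness plus the invertibility assumption, as in Lemma \ref{Lemma:continuidad}), together with the uniform convergence \eqref{eq:determinant}: on the event $\{\inf_T\det\widehat{\Sigma}_T\ge\tfrac12\inf_T\det\Sigma_T\}$, whose complement has vanishing probability, $\widehat{\beta}$ is controlled by a deterministic constant, and the contribution of the complementary near-singular event is absorbed by a H\"older split that trades the vanishing probability against the finite $(2+\eta)$-th moments. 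This tail control of the near-singular events is the delicate, technical heart of the argument.
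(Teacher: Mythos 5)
Your almost sure part is essentially the paper's own argument: the split $\widehat{\beta}-\beta^*=[\widehat{g}_n(\widehat{T})-g(\widehat{T})]+[g(\widehat{T})-g(T^*)]$, controlled by the uniform a.s. convergence of $\widehat{c}_T'\widehat{\Sigma}_T^{-1}$ to $c_T'\Sigma_T^{-1}$ on $\Theta_{p^*}$ (Lemmas \ref{Lemma:continuidad} and \ref{Lemma:covergenciauniforme}) together with Theorem \ref{Theorem:ConsistencyOfT}(b) and the continuity of the trajectories, is exactly the decomposition used in the paper, so that half is fine.

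The gap is in the $L^2$ part, precisely where you locate it, and the patch you sketch would not close it. First, on the event $\{\inf_{T\in\Theta_{p^*}}\det\widehat{\Sigma}_T\geq \tfrac12\inf_{T\in\Theta_{p^*}}\det\Sigma_T\}$ the vector $\widehat{\beta}=\widehat{\Sigma}_{\widehat{T}}^{-1}\widehat{c}_{\widehat{T}}$ is \emph{not} bounded by a deterministic constant: $\widehat{\Sigma}_T^{-1}=\mbox{adj}(\widehat{\Sigma}_T)/\det(\widehat{\Sigma}_T)$, and the entries of $\mbox{adj}(\widehat{\Sigma}_T)$ and of $\widehat{c}_T$ are sample averages of unbounded quantities (of the order of $\sup_t|X_i(t)|^2$ and $|Y_i|\sup_t|X_i(t)|$), so a lower bound on the determinant only yields a random bound, and converting it into $\sup_n\mathbb{E}\|\widehat{\beta}\|^{2+\eta}<\infty$ would require moments of $\sup_{t\in[\delta,1]}|X(t)|$ of much higher order than the assumed $2+\eta$. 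Second, the H\"older trade on the complementary near-singular event is circular: to control the expectation of $\widehat{Y}_{\widehat{T}}^2$ restricted to that event you need exactly the uniform $(2+\eta)$-moment of $\widehat{Y}_{\widehat{T}}$ (or of $\widehat{\beta}$) that is unavailable there, and the a.s. convergence \eqref{eq:determinant} provides no rate for the probability of the bad event, so there is nothing quantitative to trade against. The paper avoids the determinant/adjugate route altogether: it uses the same uniform convergence $\sup_{T\in\Theta_{p^*}}\|\widehat{c}_T'\widehat{\Sigma}_T^{-1}-c_T'\Sigma_T^{-1}\|_2\to 0$ a.s. (from \eqref{eq:uniformidad-covarianza} and \eqref{Eq:uniformidad-covarianzaXY}) to bound the whole coefficient vector, for $n$ large, by the deterministic constant $\varepsilon+\sup_{T\in\Theta_{p^*}}\|c_T'\Sigma_T^{-1}\|_2^{2+\eta}$, finite by continuity on the compact $\Theta_{p^*}$, and then dominates $\|\widetilde{X}(\widehat{T})\|_2^{2+\eta}$ by a multiple of $\sup_{t\in[\delta,1]}|\widetilde{X}(t)|^{2+\eta}$, whose expectation is finite by hypothesis; this gives the uniform integrability of $\widehat{Y}_{\widehat{T}}^2$ directly, with no event splitting and no independence assumption between the generic trajectory and the estimated coefficients. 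If you want to keep your formulation, you should replace the determinant-based event by the event on which $\sup_{T}\|\widehat{c}_T'\widehat{\Sigma}_T^{-1}-c_T'\Sigma_T^{-1}\|_2\leq\varepsilon$ and bound the coefficients there deterministically, which is in substance the paper's argument.
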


\begin{proof}
	For simplicity, denote $\widehat{T}:=\widehat{T}_{p^*}$ and $T^*:=T^*_{p^*}$. Observe that
	\begin{align*}
		|\widehat{Y}_{\widehat{T}} - Y_{T^*} | =& |\widehat{c}'_{\widehat{T}}\widehat{\Sigma}_{\widehat{T}}^{-1}\widetilde{X}(\widehat{T}) - c'_{T^*} \Sigma_{T^*}^{-1} \widetilde{X}(T^*)| \\ 
		\leq & 
		|\widehat{c}'_{\widehat{T}}\widehat{\Sigma}_{\widehat{T}}^{-1}\widetilde{X}(\widehat{T}) -
		c'_{\widehat{T}}\Sigma_{\widehat{T}}^{-1}\widetilde{X}(\widehat{T})| + |c'_{\widehat{T}}\Sigma_{\widehat{T}}^{-1}\widetilde{X}(\widehat{T})-c'_{T^*} \Sigma_{T^*}^{-1} \widetilde{X}(T^*)|\\
		\leq &
		\|\widehat{c}'_{\widehat{T}}\widehat{\Sigma}_{\widehat{T}}^{-1} -  c'_{\widehat{T}}\Sigma_{\widehat{T}}^{-1}\|_2 \, \| \widetilde{X}(\widehat{T})\|_2 + |c'_{\widehat{T}}\Sigma_{\widehat{T}}^{-1}\widetilde{X}(\widehat{T})-c'_{T^*} \Sigma_{T^*}^{-1} \widetilde{X}(T^*)|\\
		\leq &
		\sup_{T\in \Theta_{p^*}} \|\widehat{c}'_T\widehat{\Sigma}_T^{-1} -  c'_T\Sigma_T^{-1}\|_2 \sup_{T\in \Theta_{p^*}} \| \widetilde{X}(T)\|_2\\
		+ & |c'_{\widehat{T}}\Sigma_{\widehat{T}}^{-1}\widetilde{X}(\widehat{T})-c'_{T^*} \Sigma_{T^*}^{-1} \widetilde{X}(T^*)|.
	\end{align*}
	Then, to prove $\widehat{Y}_{\widehat{T}}\to Y_{T^*}$ a.s. it is enough to see that the two addends of the last expression go to 0 a.s. Observe that $\sup_{T\in \Theta_{p^*}} \|\widehat{c}'_T\widehat{\Sigma}_T^{-1} -  c'_T\Sigma_T^{-1}\|_2 \to 0$ a.s., as $n\to\infty$, by \eqref{eq:uniformidad-covarianza} and \eqref{Eq:uniformidad-covarianzaXY}. Moreover, since $X(t)$ has continuous trajectories and continuous mean function, and $\Theta_{p^*}$ is compact, we have $\sup_{T\in \Theta_{p^*}} \| \widetilde{X}(T)\|_2<\infty$. Finally, the continuity of $c_T$, $\Sigma_T$ and $\widetilde{X}(T)$, together with Theorem \ref{Theorem:ConsistencyOfT}(b), imply that $|c'_{\widehat{T}}\Sigma_{\widehat{T}}^{-1}\widetilde{X}(\widehat{T})-c'_{T^*} \Sigma_{T^*}^{-1} \widetilde{X}(T^*)|\to 0$ a.s., as $n\to\infty$.
	
	In order to prove $\widehat{Y}_{\widehat{T}} \overset{L^2}{\longrightarrow} Y_{T^*}$, as $n\to\infty$, we will check that there exists $\eta>0$ such that $\sup_{n}\mathbb{E}\vert Y_{\widehat{T}}\vert^{2+\eta} < \infty$, which in turn implies that the sequence $\widehat{Y}^2_{\widehat{T}}$ is uniformly integrable. Then, we can apply that a uniformly integrable sequence of random variables which converges in probability, also converges in $L^1$ (see e.g. Proposition 6.3.2 and the corollary of Proposition 6.3.3 in \cite{laha1979}).
	
	By assumption, there exists $\eta>0$  such that $\mathbb{E} \big( \sup_{t\in [\delta,1]} |\widetilde{X}(t)|^{2+\eta}\big)<\infty$. Observe that
	\[
	\vert \widehat{Y}_{\widehat{T}}\vert^{2+\eta} = | \widehat{c}'_{\widehat{T}}\widehat{\Sigma}_{\widehat{T}}^{-1}\widetilde{X}(\widehat{T})|^{2+\eta} \leq \|  \widehat{c}'_{\widehat{T}}\widehat{\Sigma}_{\widehat{T}}^{-1} \|^{2+\eta}_2 \, \| \widetilde{X}(\widehat{T})\|^{2+\eta}_2.
	\]
	We have seen that $\sup_{T\in \Theta_{p^*}} \|\widehat{c}'_T\widehat{\Sigma}_T^{-1} -  c'_T\Sigma_T^{-1}\|_2 \to 0$ a.s., as $n\to\infty$. Then, given $\varepsilon>0$, for large enough $n$,
	\[
	\|  \widehat{c}'_{\widehat{T}}\widehat{\Sigma}_{\widehat{T}}^{-1} \|^{2+\eta}_2\leq \varepsilon + \sup_{T\in\Theta_{p^*}} \|c'_T\Sigma_T^{-1}\|_2^{2+\eta} := C<\infty.
	\]
	From the last two displayed equations, for large enough $n$,
	\[
	\mathbb{E}\vert \widehat{Y}_{\widehat{T}}\vert^{2+\eta} \leq C\,  \mathbb{E}\|\widetilde{X}(\widehat{T})| \|_2^{2+\eta}\leq C' \, \mathbb{E} \big( \sup_{t\in [\delta,1]} |\widetilde{X}(t)|^{2+\eta}\big)<\infty,
	\]
	where $C' = C (p^*)^{(2+\eta)/2}$. Since the last upper bound does not depend on $n$, we get $\sup_{n}\mathbb{E}\vert Y_{\widehat{T}}\vert^{2+\eta} < \infty$.
\end{proof}


\subsection{Sequential approach}

The number of combinations of variables is usually too large to carry out an exhaustive search to find the optimal $p^*$ variables, even for small values of $p^*$. Then, in practice we need to define a search strategy  to perform the selection. That is, we must decide how to explore the space of all possible combinations of variables.  We propose to use the sequential approach we describe in this section.

Observe that a proof analogous to that  of Equations \eqref{Eq:IterativeQ0Matrices} and \eqref{Eq:IterativeQ0} also gives their corresponding  sample versions: 
\begin{eqnarray}
	\widehat{Q}_n(T_{p+1}) &=& \widehat{Q}_n(T_p) \hspace*{1mm}+\hspace*{1mm} \frac{\left( \widehat{c}_{T_p}'\widehat{\Sigma}_{T_p}^{-1}\widehat{c}_{T_p,p+1} - \widehat{c}_{p+1} \right)^2}{\widehat{\sigma}_{p+1}^2 - \widehat{c}_{T_p,p+1}'\widehat{\Sigma}_{T_p}^{-1}\widehat{c}_{T_p,p+1}}, \label{Eq:IterativeQnMatrices}\\[1em]
	\widehat{Q}_n(T_{p+1}) &=& \widehat{Q}_n(T_p) \hspace*{1mm}+\hspace*{1mm} \frac{\widehat{\mbox{cov}}^2\big(Y - \widehat{Y}_{T_p},\hspace*{1mm} X(t_{p+1})\big)}{\widehat{\mbox{var}}\big( X(t_{p+1}) - \widehat{X}_{T_p}(t_{p+1})\big)}.\label{Eq:IterativeQn}
\end{eqnarray}
These equations suggest a sequential way to carry out the variable selection. In each step, we  find the variable $t_{p+1}\in [\delta,1]$ maximizing the quotient in the equations above.  In this way, we obtain  nested subsets of variables, since $T_p \prec T_{p+1}$. This greedy method does not guarantee the convergence to   the global maximum of $\widehat{Q}_n$, but it shows a good behavior in practice, as we will show later on.



\section{Estimating the number of variables}\label{Sec:NumberP}

As discussed above, our method for variable selection works, whenever the regression function $\beta$ belongs to the RKHS space associated with the covariance function $K$. It is based on the idea of asymptotically minimizing (on $T_p=(t_1,\ldots,t_p)$) the residuals
$$
Q_1(T_p) := \min_{(\beta_1,\ldots,\beta_p)\in\mathbb{R}^p} \|Y - \sum_{j=1}^p \beta_j \widetilde X(t_j)\|^2_2=\|Y - \sum_{j=1}^p \beta_j^* \widetilde X(t_j)\|^2_2,
$$
where $\widetilde X(t_j)=X(t_j)-m(t_j)$ and $(\beta_1^*,\ldots,\beta_d^*)'=\Sigma_T^{-1}c_T$, $\Sigma_T$ being the covariance matrix of $(X(t_1),\ldots,X(t_p))'$ and $c_T$ the vector whose $j$-th component is $\mbox{cov}(X(t_j),Y)$. As proved in Proposition \ref{Prop:Equivalence}, this 
amounts to asymptotically maximize  the function $Q_0$ defined in \eqref{Eq:Q0}, which in turn is equivalent to minimize the function $Q_2$, defined in \eqref{Eq:Q2}. Also, the functions, $Q_1$ and $Q_2$ agree up to an additive constant and both agree with $Q_0$ up to a change of sign plus an additive constant. 

Throughout this section we assume the validity of the sparsity assumption \eqref{Eq:RKHSmodel-finito}, that is, we assume that the slope function $\beta$ has the form $\beta=\sum_{j=1}^{p*}\beta_j K(t_j^*,\cdot)$, as stated in Equation \eqref{Eq:Sparse},
for some constants $\beta_1,\ldots,\beta_{p^*}\in {\mathbb R}$ and for $T_{p^*}^*=(t_1^*,\ldots,t_{p^*}^*)$. In this case, 
we can properly speak of a specific target set 
of   ``true'' variables $T^*=T_{p^*}^*=(t_1^*,\ldots,t_p^*)$ to be selected and, in particular, of a ``true'' number $p^*$  of variables to select. 

Keeping in mind these facts, the following comments provide some clues and motivation for the data-based selection of $p^*$. They will be formalized in the statement and proof of Lemma \ref{Lemma:Qp+1-Qnto0} below.

\begin{itemize}
	
	\item[(a)] On the one hand, any selection of type
	$T_p=(t_1,\ldots,t_p)$ with $p<p^*$ is clearly sub-optimal, since it would lack some relevant information, contributed by the variables in $T^*$ not in $T_p$.
	
	\item[(b)] Likewise, a choice $T_p$ ``by excess'' with $T^*\prec T_p$ would not provide any benefit. To see this
	note that, under \eqref{Eq:RKHSmodel-finito}, the minimum of $Q_2$ is obviously attained at $T^*_p$ and the value of $Q_2$ at such minimum is 0, which cannot be improved. 
	
	\item[(c)] As a consequence, the maximum value of $Q_2$ for points with $p^*+1$ coordinates is attained at some $T_{p^*+1}$ such that $T^*\prec T_{p^*+1}$
	(that is, $T^*$ is a sub-vector of $T_{p^*+1}$)
	but, in any case, $Q_0(T_{p^*+1})-Q_0(T^*)=0$.
	
	\item[(d)] Then, the optimal $p^*$ is such that the maximum value of $Q_0(T_{p})$ agrees with that of $Q_0(T_{p^*})$ for any $T_p$ such that $T_{p^*}\prec T_p$. Thus $p^*$ is in fact the ``elbow'' value  in the plot of $p\mapsto Q_0(T_p^*)$ from which on the increase of the maximum values of $Q_0$ stops.
		
\end{itemize}

The following lemma will set the theoretical basis of our procedure of estimation of $p^*$. As a consequence of this result, a procedure to estimate $p^*$ is proposed in the next subsection.

\begin{lemma}\label{Lemma:Qp+1-Qnto0}
	Let us consider the model \eqref{Eq:RKHSmodel} under the assumption that $\beta$ can be expressed as $\beta(t)=\sum_{j=1}^{p^*}\beta_jK(t_j^*,\cdot)$, where $p^*$ is the minimal integer for which such representation holds. Define $\widehat Q_n^{max}(p)=\max_{T_p\in\Theta_p}\widehat Q_n(T_p)$. Then, under the assumptions of Lemma \ref{Lemma:continuidad} we have
	
	(a) \begin{equation}\label{Eq:Qp+1-Qnto0}
	\widehat Q_n^{max}(p^*+1)-\widehat Q_n^{max}(p^*)\to 0, \mbox{a.s.},
	\end{equation}
	where $p^*$ stands for the ``true'' number of variables  in the sparse model \eqref{Eq:Sparse} 
	
	(b) For all $p<p^*$,
	\begin{equation}\label{Eq:Qp+1-Qnto>0}
	\lim_n \left(\widehat Q_n^{max}(p+1)-\widehat Q_n^{max}(p)\right) > 0, \mbox{a.s.},
	\end{equation}
\end{lemma}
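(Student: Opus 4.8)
The plan is to push everything down to the population level and argue there. By Lemma~\ref{Lemma:covergenciauniforme}, for each fixed $q$ one has $\sup_{T_q\in\Theta_q}|\widehat Q_n(T_q)-Q_0(T_q)|\to 0$ a.s.; since a uniform approximation controls suprema, this yields $\widehat Q_n^{max}(q)\to Q_0^{max}(q):=\max_{T_q\in\Theta_q}Q_0(T_q)$ a.s.\ for every $q$. Hence both limits in the statement exist a.s.\ and reduce to population quantities, so it suffices to prove that $Q_0^{max}(p^*+1)=Q_0^{max}(p^*)$ and that $Q_0^{max}(p+1)>Q_0^{max}(p)$ for every $p<p^*$.

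The main reformulation I would use is the one supplied by Proposition~\ref{Prop:Equivalence}: since $Q_0(T_q)=\|\beta\|^2_K-Q_2(T_q)$, maximizing $Q_0$ over $\Theta_q$ is the same as minimizing the RKHS approximation error, so that $Q_0^{max}(q)=\|\beta\|^2_K-d_q^2$ with $d_q^2:=\min_{T_q\in\Theta_q}\|\beta-\sum_j\alpha_j K(t_j,\cdot)\|^2_K$. Because $\delta$ is small enough that any configuration of $\Theta_q$ admits one further $\delta$-separated point inside $\Theta_{q+1}$, enlarging the spanning set cannot increase the distance, so $d_q$ is non-increasing in $q$. Under the sparsity assumption $\beta=\sum_{j=1}^{p^*}\beta_j K(t_j^*,\cdot)$ one has $\beta\in\mathrm{span}\{K(t_j^*,\cdot)\}$, whence $d_{p^*}=0$ (take $T_{p^*}=T^*$) and, extending $T^*$ by one point, also $d_{p^*+1}=0$. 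This gives $Q_0^{max}(p^*+1)=\|\beta\|^2_K=Q_0^{max}(p^*)$, settling part (a).

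For part (b) fix $p<p^*$; monotonicity already gives $Q_0^{max}(p+1)\ge Q_0^{max}(p)$, and the content is strictness. I would argue by contradiction, assuming $d_{p+1}=d_p=:d$. Let $T_p^{\mathrm{opt}}\in\Theta_p$ attain $d_p$, let $\beta^{\mathrm{proj}}$ be the $\mathcal{H}(K)$-projection of $\beta$ onto $\mathrm{span}\{K(t_k,\cdot):t_k\in T_p^{\mathrm{opt}}\}$, and set $r:=\beta-\beta^{\mathrm{proj}}$, so $\|r\|_K=d$ and $r(t_k)=\langle r,K(\cdot,t_k)\rangle_K=0$ for $t_k\in T_p^{\mathrm{opt}}$. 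For any $t$ admissible for $T_p^{\mathrm{opt}}$ (i.e.\ $\delta$-separated from $T_p^{\mathrm{opt}}$ and from the endpoints), the recursive identity of Proposition~\ref{Prop:IterativeQ0Vars} shows, via the reproducing property and $\mbox{Cov}(Y-Y_{T_p^{\mathrm{opt}}},X(t))=r(t)$, that the gain from appending $t$ equals $r(t)^2$ divided by a strictly positive residual variance (positivity by the invertibility hypothesis); since $d_{p+1}=d_p$ forbids any gain, $r(t)=0$ at every such $t$. The true points satisfy $T^*\in\Theta_{p^*}$, so each $t_j^*\in[\delta,1-\delta]$ and is admissible (unless it sits inside a $\delta$-collar of some $t_k$, the case discussed below); hence $r(t_j^*)=0$ for every $j$. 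Using $r\perp_K\beta^{\mathrm{proj}}$ together with the representation of $\beta$,
\[
\|r\|^2_K=\langle r,\beta\rangle_K=\sum_{j=1}^{p^*}\beta_j\,r(t_j^*)=0,
\]
contradicting $\|r\|_K=d>0$. Therefore $d_{p+1}<d_p$, i.e.\ $Q_0^{max}(p+1)>Q_0^{max}(p)$.

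The step I expect to be the real obstacle is precisely the passage from ``no admissible point improves the fit'' to ``$r$ vanishes at all the true points $t_j^*$''. Its clean half is mechanical: the reproducing property turns each condition $r(t)=0$ into an orthogonality $r\perp_K K(\cdot,t)$, and the strict positive definiteness of $K$ (identifiability in $\mathcal{H}_0(K)$) prevents a nonzero finite kernel combination from being orthogonal to kernels at all of its own nodes. The genuine difficulty is the exceptional near-collision configuration in which an optimal point $t_k$ lies within distance $\delta$ of a true point $t_j^*$ without equaling it, so that $t_j^*$ cannot be used as a test point directly; ruling this out requires an optimality argument (moving $t_k$ onto $t_j^*$ cannot shrink the span and, by strict positive definiteness, strictly improves the approximation unless $t_k=t_j^*$), and is where the compactness of $\Theta_p$ and the invertibility assumptions are genuinely used.
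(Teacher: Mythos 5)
Your reduction to the population level and your part (a) are essentially the paper's own argument: via Proposition \ref{Prop:Equivalence} one has $Q_0=\|\beta\|_K^2-Q_2$, the sparse representation gives $Q_2(T^*_{p^*})=0$, hence the maxima at sizes $p^*$ and $p^*+1$ both equal $\|\beta\|_K^2$, and Lemma \ref{Lemma:covergenciauniforme} transfers this to $\widehat Q_n^{max}$.

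For part (b) your route is genuinely different from the paper's, and it has a gap exactly where you flag it. From $Q_0^{max}(p+1)=Q_0^{max}(p)$ you can only conclude that the residual $r=\beta-\beta^{\mathrm{proj}}$ vanishes at the nodes of $T_p^{\mathrm{opt}}$ and at points $t$ that are $\delta$-separated from those nodes and from the endpoints; a true point $t_j^*$ lying strictly inside a $\delta$-collar of some node $t_k$ with $t_j^*\neq t_k$ is not covered, and your identity $\|r\|_K^2=\sum_j \beta_j r(t_j^*)$ needs every $t_j^*$. Your sketched fix does not close this: replacing $t_k$ by $t_j^*$ does not enlarge the span (it exchanges one kernel section for another), so no monotonicity argument applies; the claim that the swap ``strictly improves the approximation'' is unsupported (if anything, optimality of $T_p^{\mathrm{opt}}$ says the opposite whenever the comparison is available); and the swapped configuration need not even belong to $\Theta_p(\delta)$, since $t_j^*$ may sit within $\delta$ of the neighbouring node on the other side, so the optimality of $T_p^{\mathrm{opt}}$ cannot be invoked at all. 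Because functions in $\mathcal H(K)$ are not analytic in general, vanishing of $r$ on the admissible set does not propagate into the collars, so this configuration must be handled and your argument stops short of doing so.

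For comparison, the paper's proof of (b) avoids the collar issue entirely: assuming $Q_0(T^*_{p+1})=Q_0(T^*_p)$ for some $p<p^*$, it rewrites this as equality of the optimal prediction errors $Q_1$ at sizes $p$ and $p+1$, propagates the equality ``by recurrence'' through $p,p+1,\ldots,p^*$, and concludes that $\beta$ admits a representation with only $p<p^*$ kernels, contradicting the minimality of $p^*$ (that recurrence step is itself stated tersely, but it never needs to locate the true points relative to the selected ones). If you wish to keep your residual-based argument — which has the merit of making the roles of Proposition \ref{Prop:IterativeQ0Vars} and of strict positive definiteness explicit — you must supply a rigorous treatment of the near-collision case, or else switch to the recurrence/minimality argument for part (b).
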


\begin{proof}
	(a) Let us first prove
	\begin{equation}\label{Eq:difQ_n0}
	Q_0(T_{p^*+1}^*)-Q_0(T_{p^*}^*)=0
	\end{equation}		
	To see this, note that in the proof of Proposition \ref{Prop:Equivalence} we have proved $Q_0(T_p)=\|\beta\|^2_K - Q_2(T_p)$ with $Q_2(T_p) = \min_{(\beta_1,\ldots,\beta_p)\in\mathbb{R}^p}\|\beta - \sum_{j=1}^p \beta_j K(t_j,\cdot)\|^2_K$. Also, under \eqref{Eq:Sparse}, $Q_2(T_{p^*}^*)=0$ so that
	$Q_0(T_{p^*}^*)=\|\beta\|_K^2$, which is the maximum possible value of $Q_0$. 
	On the other hand, it is clear that 
	$Q_2(T_{p^*+1}^*)\leq Q_2(T_{p^*}^*)$ so that we must also have $Q_2(T_{p^*+1}^*)=0$ and $Q_0(T_{p^*+1}^*)=\|\beta\|_K^2$. This proves
	\eqref{Eq:difQ_n0}. Now conclusion \eqref{Eq:Qp+1-Qnto0} follows directly from the uniform convergence of $Q_n$ to $Q_0$, as established in Lemma \ref{Lemma:covergenciauniforme}. 
	
	(b) Similarly to part (a) we only need to prove
	\begin{equation}\label{Eq:difQ_n>0}
	Q_0(T_{p+1}^*)-Q_0(T_{p}^*)>0\ \mbox{for all } p<p^*.
	\end{equation}
	Indeed, assume we have 
	$Q_0(T_{p+1}^*)-Q_0(T_{p}^*)=0$ for some $p<p^*$.
	Then, since the prediction error $Q_1(T_p)$ defined in \eqref{Eq:Q1} satisfies
	$Q_1(T_p)=-Q_0(T_p)+\|\beta\|^2_K+\sigma^2$
	we would have that the prediction error $Q_1(T^*_p)$ obtained with $p$ variables in the sparse model $Y=\sum_{j=1}^q\beta_jX(t_j)+\varepsilon$, with $\mbox{var}(\varepsilon)=\sigma^2$ would be the same, for $q=p$ and $q=p+1$. Then, by recurrence, we get that the error would be in fact the same, irrespective of the number of $q$ explanatory variables in the range $p,p+1,\ldots,p^*$. Thus, the linear model $Y=\langle\beta,X\rangle_K+\varepsilon$ holds for a regression function of type $\beta=\sum_{j=1}^{p}\beta_jK(t_j^*,\cdot)$ with $p<p^*$. This is a contradiction with the assumption that $p^*$ is the minimal value for which such model holds. 
	 
	Now the result follows from \eqref{Eq:difQ_n>0} and the a.s. uniform convergence of $\widehat Q_n$  to $Q_0$.	
\end{proof}


\subsection{The estimator of $p^*$}
 
 The above discussion suggests the following method to estimate $p^*$:
 \begin{enumerate}
 	\item Define
 	\begin{equation}\label{Eq:Delta}	
 	\Delta=\min_{p<p^*}\hspace*{2mm} (Q_0(T_{p+1}^*)-Q_0(T_p^*))
 	\end{equation}
 	Assume we are able to fix a value $\epsilon>0$ such that $\epsilon<\Delta$.
 	\item Define
 	\begin{equation}\label{Eq:EstP}
 	\widehat{p} = \min \left\lbrace p: \widehat{Q}_n^{max}(p+1)-\widehat{Q}_n^{max}(p)<\epsilon\right\rbrace,
 	\end{equation}
 \end{enumerate}

\begin{theorem}\label{Th:hatp-to-p}
	Under the assumptions of Lemma \ref{Lemma:Qp+1-Qnto0} the estimator $\widehat p$ defined in \eqref{Eq:EstP} fulfils $\widehat p\to p^*$, almost surely.
	\begin{proof}
		This result is a direct consequence of Lemma \ref{Lemma:Qp+1-Qnto0}
	\end{proof}	
\end{theorem}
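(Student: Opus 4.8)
The plan is to show that, almost surely, $\widehat{p} = p^*$ for all sufficiently large $n$, which immediately yields $\widehat{p} \to p^*$ a.s. Everything follows from the two convergence statements in Lemma \ref{Lemma:Qp+1-Qnto0}, combined with the uniform convergence $\widehat{Q}_n \to Q_0$ of Lemma \ref{Lemma:covergenciauniforme} and the key assumption $\epsilon < \Delta$. Concretely, I would prove the two one-sided inequalities $\widehat{p} \geq p^*$ and $\widehat{p} \leq p^*$, each holding eventually and almost surely.

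For $\widehat{p} \geq p^*$, I fix any integer $p < p^*$. By uniform convergence the maxima satisfy $\widehat{Q}_n^{max}(p) \to Q_0(T_p^*)$, so the increment $\widehat{Q}_n^{max}(p+1) - \widehat{Q}_n^{max}(p)$ converges a.s.\ to $Q_0(T_{p+1}^*) - Q_0(T_p^*)$, which by \eqref{Eq:Delta} is at least $\Delta > \epsilon$ (this is precisely part (b) of Lemma \ref{Lemma:Qp+1-Qnto0}, via \eqref{Eq:difQ_n>0}). Hence, almost surely, this increment eventually exceeds $\epsilon$, so the defining condition of $\widehat{p}$ in \eqref{Eq:EstP} fails at $p$. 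As the index range $\{0,1,\ldots,p^*-1\}$ is finite, I intersect the corresponding probability-one events to obtain a single event on which, for all large $n$, the condition fails simultaneously for every $p < p^*$; by the definition of $\widehat{p}$ as a minimum this forces $\widehat{p} \geq p^*$.

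For $\widehat{p} \leq p^*$, part (a) of the lemma gives $\widehat{Q}_n^{max}(p^*+1) - \widehat{Q}_n^{max}(p^*) \to 0$ a.s., and since $\epsilon > 0$ this increment is eventually below $\epsilon$. Thus the condition in \eqref{Eq:EstP} holds at $p = p^*$, giving $\widehat{p} \leq p^*$ for all large $n$. Intersecting with the event from the previous step, I conclude $\widehat{p} = p^*$ eventually, almost surely.

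I do not anticipate a genuine obstacle, since Lemma \ref{Lemma:Qp+1-Qnto0} already supplies both ingredients and the proof is essentially a threshold-crossing argument. The one point deserving care is that each increment is controlled only on its own probability-one event, so I rely on the finiteness of the range $p < p^*$ to take a finite intersection and secure a single almost-sure event on which all inequalities hold at once. The role of the hypothesis $\epsilon < \Delta$ is exactly to separate the pre-elbow increments (bounded below by $\Delta$) from the post-elbow increment (tending to $0$), so that the threshold $\epsilon$ detects the elbow at $p^*$ in the limit.
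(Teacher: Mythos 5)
Your proof is correct and is essentially the paper's argument spelled out: the paper simply states that the theorem is a direct consequence of Lemma \ref{Lemma:Qp+1-Qnto0}, and your threshold-crossing argument (increments for $p<p^*$ eventually exceed $\epsilon$ since their limits are at least $\Delta>\epsilon$, while the increment at $p^*$ tends to $0<\epsilon$, with a finite intersection of almost-sure events) is exactly the intended expansion of that one-line proof.
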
 
 
In practice, the calculation of $\widehat p$ could be made using techniques inspired in the change point detection methodology in time series. Thus, 
we could interpret the collection of values $L_n(p)=\log\big(\widehat{Q}_n^{max}(p+1)-\widehat{Q}_n^{max}(p)\big)$ for $p=1,\ldots$ as a time series. Then, we could  apply the usual $k$-means clustering algorithm to these values, with $k=2$.  We would estimate $\widehat p$ as the minimum value of $p$ such that all the values $L_n(p)$ with $p\geq\widehat p$ belong to a different cluster than that of $L_n(1)$. However, different approaches could be used. For instance in \cite{delaigle2012}, where empirical methods to select both $p$ and $T_p$ in functional classification are given, the authors suggest to set $\varepsilon$ equal to $\rho \widehat{Q}_n^{max}(1)$ for a pre-determined small $\rho$.

\section{When $p^*$ is not estimated: the conservative oracle property}\label{Sec:pnot}
 
Under the sparseness assumption \eqref{Eq:Sparse}, where $p^*$ is unknown, another sensible approach for the choice of the number $p$ of selected variables is to take a conservative, large enough value of $p$. 

The basic idea of this section is easy to state: suppose that a ``conservative oracle'' gives us a value $p$ such that $p>p^*$. Accordingly, we perform our variable selection procedure for such value $p$. This yields $p$ variables $\widehat t_1,\ldots,\widehat t_p$.  Then, we can be sure that the ``true'' variables $t_1^*,\ldots,t_{p^*}$ are very close to the $p^*$ variables in $\{\widehat t_1,\ldots,\widehat t_p\}$.  

The next result formalizes this property.

\begin{theorem}\label{Th:conservative}
	Let us consider the model \eqref{Eq:RKHSmodel} under the assumption that $\beta$ can be expressed as $\beta(t)=\sum_{j=1}^{p^*}\beta_jK(t_j^*,\cdot)$, where $p^*$ is the minimal integer for which such representation holds. Let $\widehat t_1,\ldots, \widehat t_p$ the variables selected by the method \eqref{Eq:EstimT}, where $p$ is a given value larger than $p^*$. Then, for all $\epsilon>0$,
	\begin{equation}\label{Eq:oracle}
	{\mathbb P}\left(t_i^*\in \bigcup_{j=1}^p(\widehat t_j-\epsilon,\widehat t_j+\epsilon),\ i=1,\ldots,p^*\right)=1,\  \mbox{eventually, as } n\to\infty.
	\end{equation} 
	\begin{proof}
		Recall that the choice of the variables $T_p=(t_1,\ldots,t_p)$ is performed by asymptotically maximizing the function $Q_0(T_p)$, defined in \eqref{Eq:Q0}. More precisely, as $Q_0$ depends on unknown population quantities, we in fact maximize the estimator $\widehat{Q}_n$ defined in Subsection \ref{Subsec:themethod}. 
		
		Now, let us note that the maximum of $Q_0$ is not unique. Indeed, we assume that the ``minimal'' sparse representation of $\beta$ has the form $\beta(t)=\sum_{j=1}^{p^*}\beta_jK(t_j^*,\cdot)$ but, of course, if $p>p^*$, we may formally put $\beta(t)=\sum_{i=1}^p\beta_iK(s_i,\cdot)$
		as long as the ``true'' optimal points $t_i^*,\ldots,t_{p^*}$ are among the $s_i$'s and all the coefficients $\beta_i$ not matching with such $t_i^*$'s are null. On the other hand, from the uniqueness of the function $\beta$, all the maxima of $Q_0(T_p)$ must have an expression of this type. 
		
		Let $T^*_p$ be one of these maxima. From the above discussion it follows that the coordinates $s_i$ of $T^*_p$ whose corresponding coefficients $\beta_i$ are not null, must necessarily coincide with one of the points $t_i^*$ in the minimal representation $\beta(t)=\sum_{j=1}^{p^*}\beta_jK(t_j^*,\cdot)$. Let us denote by $T^{**}_p$ the subvector of $T^*_p$ corresponding to these values $t_i^*$.
		Let $\widehat T^*_{p,n}$ be a maximizer 
		of $\widehat{Q}_n(T_p)$ and denote by $\widehat T^{**}_{p,n}$ the subvector of $\widehat T^*_{p,n}$ corresponding to the same 
		coordinates as those of $T^{**}_p$. 
		
		According to Lemma \ref{Lemma:covergenciauniforme}, $Q_n(T_p)$ converges to $Q_0(T_p)$ uniformly a.s. in $T_p$. This entails the a.s. convergence of the subvectors $\widehat T^{**}_{p,n}$ to $T^{**}_p$, since (as indicated above) all the maxima of $Q_0$ must content the subvector $T^{**}_p$. 
		
		Now, conclusion \eqref{Eq:oracle} is a direct consequence from the definition of almost sure convergence. 
		
	\end{proof}
\end{theorem}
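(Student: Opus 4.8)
The plan is to recast the statement as a set-valued consistency result for the $\argmax$ of $\widehat{Q}_n$, the crucial point being that when $p>p^*$ the maximizer of the limit criterion $Q_0$ is \emph{not} unique, so the single-maximum argument of Theorem \ref{Theorem:ConsistencyOfT} does not apply directly. First I would identify precisely the set of maximizers of $Q_0$ on $\Theta_p$. Using the identity $Q_0(T_p)=\|\beta\|^2_K-Q_2(T_p)$ established in the proof of Proposition \ref{Prop:Equivalence} (see \eqref{eq.auxequivalence}), maximizing $Q_0$ is the same as driving $Q_2(T_p)$ — the squared $\mathcal{H}(K)$-distance from $\beta$ to $\mathrm{span}\{K(t_j,\cdot):t_j\in T_p\}$ — to zero, i.e.\ $T_p$ is a maximizer iff $\beta\in\mathrm{span}\{K(t_j,\cdot):t_j\in T_p\}$. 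Under the minimal representation $\beta=\sum_{j=1}^{p^*}\beta_jK(t_j^*,\cdot)$ I would then show this holds exactly when $\{t_1^*,\ldots,t_{p^*}^*\}\subseteq\{t_1,\ldots,t_p\}$: the inclusion $\supseteq$ is immediate, and for the converse I would equate the two $\mathcal{H}_0(K)$-representations and invoke the uniqueness of representation in terms of $K$ guaranteed by the strict positive definiteness of $K$, together with minimality of $p^*$ (which forces every $\beta_j\neq 0$, so no true point can be omitted). This yields
$$\mathcal{M}:=\argmax_{T_p\in\Theta_p}Q_0(T_p)=\bigl\{T_p\in\Theta_p:\ \{t_1^*,\ldots,t_{p^*}^*\}\subseteq\{t_1,\ldots,t_p\}\bigr\},$$
a nonempty closed subset of the compact set $\Theta_p$ (nonempty since any extension of $T^*_{p^*}$ inside $\Theta_p$ belongs to it).

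The second step is set-valued $\argmax$ consistency, namely $\mathrm{dist}(\widehat{T}_{p,n},\mathcal{M})\to 0$ a.s. I would prove this by a ``well-separated maximum'' argument adapted to a maximizing set. Fixing an open neighborhood $U\supseteq\mathcal{M}$, continuity of $Q_0$ (Lemma \ref{Lemma:continuidad}) and compactness of $\Theta_p\setminus U$, which contains no maximizer, give $M_U:=\sup_{\Theta_p\setminus U}Q_0<\max_{\Theta_p}Q_0=:M^*$. The uniform convergence $\sup_{T_p}|\widehat{Q}_n-Q_0|\to 0$ a.s.\ (Lemma \ref{Lemma:covergenciauniforme}) then forces, for large $n$, $\widehat{Q}_n<M^*$ off $U$ while $\max\widehat{Q}_n$ exceeds $M^*$ minus an arbitrarily small slack attained on $\mathcal{M}$, so any maximizer $\widehat{T}_{p,n}$ lies in $U$. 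Letting $U$ shrink to $\mathcal{M}$ yields $\mathrm{dist}(\widehat{T}_{p,n},\mathcal{M})\to 0$ a.s.

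Finally I would translate this into \eqref{Eq:oracle}. Almost surely, for all large $n$ there is $T^{(n)}\in\mathcal{M}$ with $\|\widehat{T}_{p,n}-T^{(n)}\|_\infty<\epsilon$; since $T^{(n)}\in\mathcal{M}$, each true point $t_i^*$ is one of its coordinates, say $T^{(n)}_{j_i}=t_i^*$, whence $|\widehat{t}_{j_i}-t_i^*|<\epsilon$ and $t_i^*\in\bigcup_{j=1}^p(\widehat{t}_j-\epsilon,\widehat{t}_j+\epsilon)$ for every $i=1,\ldots,p^*$. Thus the event in \eqref{Eq:oracle} holds for all large $n$ on a probability-one set, which is exactly the claimed conclusion. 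I expect the main obstacle to be the second step: because $\mathcal{M}$ is not a singleton, one must argue convergence to the \emph{set} of maximizers rather than to a point, and it is essential that $\mathcal{M}$ is closed and that $Q_0$ stays strictly below $M^*$ outside any neighborhood of $\mathcal{M}$ — both resting on compactness of $\Theta_p$ and continuity of $Q_0$. This is precisely what replaces the uniqueness hypothesis available in Theorem \ref{Theorem:ConsistencyOfT}.
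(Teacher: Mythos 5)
Your proposal is correct and follows essentially the same route as the paper: characterize the maximizers of $Q_0$ on $\Theta_p$ as exactly those $T_p$ whose coordinates contain the true points (via $Q_0=\|\beta\|_K^2-Q_2$, strict positive definiteness of $K$ and minimality of $p^*$), and then use the uniform a.s.\ convergence of $\widehat{Q}_n$ to $Q_0$ (Lemma \ref{Lemma:covergenciauniforme}) to force the empirical maximizers into any neighborhood of that maximizing set, which gives \eqref{Eq:oracle}. Your explicit set-valued argmax-consistency step (well-separated maximum over the closed maximizing set, compactness of $\Theta_p$) is simply a more careful rendering of the paper's informal claim that the relevant subvectors of the empirical maximizers converge to the true points.
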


This result is  reminiscent of the \textit{Sure Screening Property} defined in \cite{fan2008}, which is used to quantify the efficiency of multivariate variable selection methods. But, obviously, property \eqref{Eq:oracle} is adapted to cope with the functional nature of the data and the fact that the values $t_i$ range on a continuous domain.

 
\section{Experiments}\label{Sec:experiments}

The purpose of this section is to give some insights on the practical behaviour of our proposal for variable selection, both in simulations and real data examples. We are aware that the design of these experiments is largely discretionary, as the range of possible models for simulation is potentially unlimited (especially in the case of functional data models) and there is also a considerable amount of real data examples currently available in the FDA literature. Still, our choices have not been completely arbitrary. We have tried to follow some objective criteria. First, the theoretical models chosen for the simulations must obviously include some situations in which our crucial ``sparseness'' assumption $\beta=\sum_j\beta_jK(t_j,\cdot)$ is fulfilled. As discussed above, such models are quite natural if we are willing to use variable selection techniques. Also, it looks reasonable to include at least one model in which this assumption is not valid. Regarding the real data,  we have just chosen two examples used in the recent literature for the purpose of checking other variable selection methods in functional regression settings.  

In any case, we would like to emphasize that we make here no attempt to draw any definitive conclusion on the performance of our method when compared with others. In our view, no unique empirical study can lead to safe, objective conclusions in this regard: the only reliable verdict should be given by the users community, after some time of practice with real data problems. Our purposes here are far more unassuming; we just want to provide some hints suggesting that our proposal

(a) has a satisfactory performance in the ``sparse'' models for which it has been designed,

(b)  can be implemented in practice, with an affordable computational cost, 

(c) could be hopefully competitive under other theoretical models, far from the ideal assumption $\beta=\sum_j\beta_jK(t_j,\cdot)$,

(d) has also a satisfactory practical performance in a couple of real data examples commonly used in the literature of variable selection.


\subsection{Simulation experiments}

Keeping in mind the above general lines, we next define the simulation models under study. In our context a ``model'' is defined by three elements: a stochastic process (from which the functional data are generated), a regression equation, of type $Y=\langle X,\beta\rangle_K+\varepsilon$ (or, more generally, $Y=g(X)+\varepsilon$) and an error variable $\varepsilon$. In what follows, $\varepsilon$ has been chosen in all cases as $\varepsilon\sim N(0,\sigma)$ with $\sigma = 0.2$.

We have considered six processes, covering a broad range of different situations. 
\begin{enumerate}
	\item \textit{Standard Brownian Motion} (Bm) $\{B(t),\ t\in[0,1]\}$.
	
	\item \textit{Geometric Brownian Motion} (gBm). This non-Gaussian process is also known as exponential Brownian motion. It can be defined just by $X(t)=e^{B(t)}$. 
	
	\item \textit{Integrated Brownian Motion} (iBm): it is obtained as $X(t) = \int_0^t B(s)\mathrm{d}s$. Note that the trajectories of this non-Markovian process are smooth.  
     
	\item \textit{Ornstein-Uhlenbeck process} (OU). This is a Gaussian process $\{X(t)\}$ which satisfies the stochastic differential equation $\mathrm{d}X(t) = \theta (\mu - X(t))\mathrm{d}t + \sigma \mathrm{d}B(t)$. In our simulations we have chosen $\theta=\mu=\sigma=1$.
	
	\item \textit{Fractional Brownian Motion} (fBM). This process is an generalization of the Brownian motion $B(t)$ but, unlike $B(t)$, it has not independent increments. The mean function of this Gaussian process is identically 0 and its covariance function is $k(t,s) = \frac{1}{2}(|t|^{2H} + |s|^{2H}-|t-s|^{2H})$, where $H\in (0,1)$ is the so-called Hurst exponent. Note that for $H=0.5$, this process coincides with the standard Brownian Motion. Also, the trajectories of this process are still not differentiable at every point but the index $H$ is closely related to the H\"older continuity properties of these trajectories. This entails when $H>0.5$, the trajectories look ``more regular'' than those of the Brownian motion, having a wilder appearance for $H<0.5$. To cover both cases we have used $H=0.2$ and $H=0.8$ in our simulations.
\end{enumerate}

Figure \ref{Fig_TraySim} shows some trajectories of each of these six processes, where the variables $X(t)$ for $t$ in a neighbourhood of 0 have been omitted to satisfy the non-degeneracy requirements of the method. 

As for the regression function $g$ we have considered the following three choices.

\begin{figure}
\centering
\includegraphics[scale=0.42]{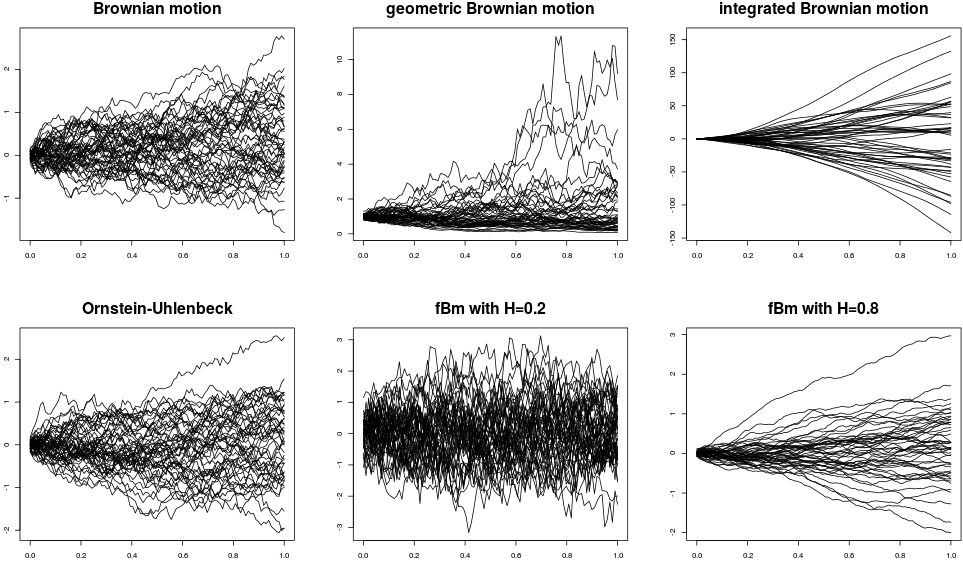}
\caption{\footnotesize 50 trajectories of each of the processes used in the simulations.}
\label{Fig_TraySim}
\end{figure}

\begin{enumerate}
	\item Two functions $\beta$ in \eqref{Eq:RKHSmodel} of type $\beta(t)=\sum_j\beta_jK(t_j^*,\cdot)$ so that the regression model reduces to the ``sparse version'' \eqref{Eq:RKHSmodel-finito}. We have considered two different regression functions. For the first one, we have used the set of points $T^* = (0.2, 0.4, 0.9) $ with weights $\beta_{T^*} = (2,-5,1)$; this is ``Regression model 1'' in the tables. For the second one, we have used $T^* = (0.16, 0.47, 0.6, 0.85, 0.91)$ and $\beta_{T^*} = (2.1, -0.2, -1.9, 5, 4.2)$ (``Regression model 2'' in the tables). Therefore, the response variables in both cases are given, respectively, by
	{\small \begin{eqnarray*}
	Y_1 &=& 2X(0.2) -5X(0.4) + X(0.9) + \varepsilon, \\
	Y_2 &=& 2.1X(0.16) -0.2X(0.47) -1.9X(0.67)+ 5X(0.85)+ 4.2X(0.91) + \varepsilon.
	\end{eqnarray*}}
	
	\item $\beta(t)=\log(1+t)$ and the regression model is \eqref{Eq:L2} with $\alpha_0=0$. Thus, the sparse RKHS model \eqref{Eq:RKHSmodel-finito} does not hold in this case.  This is ``Regression model 3'' in the tables. It has been already used in \cite{cuevas2002}. Therefore, the corresponding response variable is generated by
	$$Y_3 = \int_0^1 \log(1+t)X(t)\mathrm{d}t + \varepsilon.$$ 
\end{enumerate}


\subsection{Real data}
We have also checked the different methods when applied to two real data sets. Since these data have been already considered in other recent papers of the FDA literature, we will give only brief descriptions of them.
\begin{enumerate}
	\item \textit{Ash content in sugar samples}. This data set has been  used, for example, in  \cite{aneiros2014}. The version we use corresponds in fact to a subset of the whole data set, available in  \url{http://www.models.kvl.dk/Sugar_Process}.
	The response variable $Y$ is the percentage of ash content in $266$ sugar samples. The trajectories $X(t)$ are the fluorescence spectra from $275$ to $560$ nm at excitation wavelength $290$. These curves are discretized in a grid of $100$ equispaced points. 
	\item \textit{Mediterranean fruit flies}. This data set has been recently used, for example, in \cite{muller2016}. The trajectories $X(t)$ are the egg-laying profiles (number of eggs at the day $t$) for $512$ female Mediterranean fruit flies during a period of $30$ days, and the response variable $Y$ is the total number of eggs laid during their remaining lifetime. Therefore, the trajectories are discretized in a grid of $30$ points.  
\end{enumerate}

Figure \ref{Fig_TrayReal} shows some trajectories of both data sets.

\begin{figure}
\centering
\includegraphics[scale=0.4]{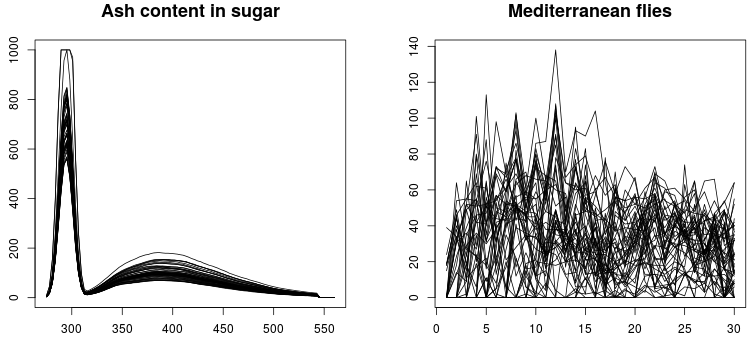}
\caption{\footnotesize 50 trajectories of each of the real data sets.}
\label{Fig_TrayReal}
\end{figure}


\subsection{Methods under study and methodology}

We compare our proposal with other methods for variable selection recently considered in the literature which, in general, have been implemented in the original papers.  We now list the methods under study along with the notation used in the tables below.

\begin{enumerate}
	\item The method proposed in this paper (RKHS). It has been implemented using the iterative approximation described in equation \eqref{Eq:IterativeQn}.  The number of relevant points is chosen as explained in Section \ref{Sec:NumberP}. Therefore, no validation technique is required.	
	\item The variable selection procedure proposed in \cite{kneip2015} (KPS): in the original article, a mixed method for standard functional linear regression and variable selection technique is proposed. Since we are here concerned with variable selection, we have implemented just the corresponding part of the proposal. Essentially, the idea is to select the points (called ``impact points'' in \cite{kneip2015}) maximizing the covariance between the response variable $Y$ and a ``decorrelated'' version, $Z(t)$, of the original process. By construction, the decorrelated process $Z(t)$ is such that $Z(t)$ and $Z(s)$ are almost uncorrelated whenever $|t-s|\geq \delta$. The value $\delta$ and the number of selected variables are chosen using the BIC criterion, as proposed in the original paper.
	
	\item \textit{Partitioning Variable Selection} (PVS) with ML penalization, as proposed in \cite{aneiros2014}.  The original sample must be split into two independent subsamples, which should be asymptotically of the same sizes. The basic idea is to apply some multivariate variable selection technique in this context, but taking advantage of the functional structure of the data. The procedure works in two steps. In the first step, one constructs a equispaced subgrid of variables among all the variables in the original grid. Then a variable selection technique for multivariate data is applied on this subgrid, using the first subsample of the original data. For instance, we might use LASSO with ML penalization, as proposed in the original paper. Then, in the second step, this variable selection technique is applied again to an enlarged grid, constructed by taking all variables in an interval around those selected in step 1 (using the second subsample). Since the version of LASSO that we are considering selects automatically the number of variables, the only smoothing parameter to be selected here is the grid step for the points used in first stage of the method.  This parameter is set by cross-validation.
	
	\item \textit{Maxima Hunting} (MH), proposed in \cite{berrendero2016}: The original method was proposed for variable selection in supervised classification, but there is no conceptual restriction to apply the same procedure in a regression setting. The basic idea is to select the local maxima of the ``distance covariance” (association measure for random variables introduced in \cite{szekely2007}) between the response and the marginal variables of the process. In practice, the numerical identification of these maxima depends on a smoothing parameter $h$ which is chosen by cross-validation. The number of variables is also set by cross-validation. 
	
	\item \textit{Partial Least Squares} (PLS). This technique is well-\-known among the FDA users of functional data. The goal of PLS is not to pick up a few variables but to select some appropriate linear functionals of the original data (very much in the spirit of principal components analysis). So PLS is not a variable selection procedure, but a dimension reduction method. This means that PLS is not directly comparable to  the variable selection methods considered here, since its aims are not the exactly same. When we choose to use a variable selection procedure, it is understood that we want to perform some kind of dimension reduction still keeping the interpretability of the information directly given in terms of the original variables. By contrast, PLS might perhaps provide some gains in efficiency but at the expense of doing a dimensionality reduction with a more difficult interpretation. Anyway, we have included PLS in our study as a useful reference, just in order to check how much do we lose by restricting ourselves to variable selection methods. We have used the function ``fregre.pls.cv'' of the fda.usc R-package to compute the predictions.
	
	\item \textit{Base}. We denote by ``base'' the prediction methodolgy  derived from the standard $L_2$ linear regression model \eqref{Eq:L2}. No variable selection or dimension reduction procedure is done. Hence,  this method is incorporated just for the sake of comparison, to assess the accuracy of the predictions based on some previous variable selection procedure with those provided by the standard functional regression model \eqref{Eq:L2}.   We have used the function ``fregre.basis.cv'' of the fda.usc R-package (which relies on a basis representation of the trajectories) to compute the predictions.
\end{enumerate}

For each model, all methods are checked for the sample size $n = 150$, which has been split on 100 observations used as training data and 50 employed as test data.  As usual, the functional simulated data are discretized to $(X(t_1), \ldots, x(t_{100}))$, where $t_i$ are equispaced points in $[0, 1]$, starting from $t_1=1/100$. For all the experiments we obtain the Relative Mean Squared Error (RMSE) of each method, as defined by
\begin{equation} \label{Eq:RMSE}
RMSE\hspace*{1mm} (\widehat{Y};Y) = \frac{\sum_{i=1}^n \left(\widehat{Y}_i - Y_i \right)^2}{\sum_{i=1}^n Y_i^2}.
\end{equation}

Moreover, in those cases where $\beta$ has a ``sparse'' form of type $\beta(t)=\sum_j\beta_jK(t_j,\cdot)$ we obtain two measures of the accuracy in the variable selection procedure. Namely, we calculate the Hausdorff distance between the set of estimated points and the set of the real ones $T^*$ as well as the number of points selected ($\widehat{p}$) in order to compare it with the real number ($p^*$). We have imposed a maximum of $10$ selected points to the methods, except to PVS, since this method does not permit to decide the number of selected variables. The Hausdorff distance gives us an idea of the precision of the method since it takes into account the separation to the real points as well as the number of selected points. Each experiment has been replicated $100$ times.


\subsection{Numerical outputs}

Tables \ref{Table_RMSEsim12}, \ref{Table_RMSEsim3} and \ref{Table_RMSEreal} provide the performance of the methods measured in terms of the Relative Mean Squared Error (Eq. \eqref{Eq:RMSE}) of the predictions, for each of the three regression models tested and for the two real data sets. Methods are presented in columns. Models appear in rows.  Table \ref{Table_Haus} contents the Hausdorff distance between the selected and the ``true'' relevant points for the four variable selection methods and the two models with sparse $\beta$ function. For these same experiments, Table \ref{Table_P} provides the number of selected points. In all the tables of the simulated data appear the mean and the standard deviation of each of the measured quantities.

In view of these tables, for the simulated data the proposed method seems to outperform the other variable selection procedures, according to all the considered criteria metrics (RMSE, Hausdorff distance and number of points) whenever a sparse model of type \eqref{Eq:L2VS} holds. The PVS method also performs quite well. When the model is not satisfied, and the response variable depends on the whole trajectory, PLS is the best in general, as expected. However, the order of magnitude of the error for PLS is the same as that of the variable selection methods in most cases. That is, using a few number of variables instead of the whole trajectory does not significantly affect  the prediction error, even if the response depends on the whole trajectory. For the sugar data, our method is slightly outperformed by the other variable selection ones, but it is better than the methods that use the whole trajectories. In addition, RKHS method uses only 5 points in this case, in contrast with KPS (10 points, which is the fixed maximum) and PVS (25 points, since it is not possible to select the number of points for this method). For the Mediterranean fruit flies set, the errors of all the variable selection techniques are very similar to the ones obtained with PLS and the base method.

\begin{table}
\centering
\resizebox{\textwidth}{!}{\begin{tabular}{l | l l l l l l }
\multicolumn{7}{c}{Regression model 1} \\\hline
&	RKHS  &  KPS   &  PVS   &  MH  &   PLS   &   Base \\ \hline
Bm &	\boxed{0.982} \hspace*{1mm}(0.264) &	2.14 \hspace*{1mm}(1.73) &	\boxed{1.14} \hspace*{1mm}(0.299) & 5.32 \hspace*{1mm}(2.92) & 5.27 \hspace*{1mm}(1.67) & 6.16 \hspace*{1mm}(1.56) \\
gBm &	\boxed{0.451} \hspace*{1mm}(0.487) &  5.15 \hspace*{1mm}(4.55) &  \boxed{0.61} \hspace*{1mm}(0.408) &  7.34 \hspace*{1mm}(11.6) &  4.59 \hspace*{1mm}(2.11) &  5.26 \hspace*{1mm}(2.28) \\
iBM & \boxed{0.017} \hspace*{1mm}(0.0205) &  0.0856 \hspace*{1mm}(0.106) &  0.13 \hspace*{1mm}(0.051) &  47.7 \hspace*{1mm}(13.4) &  0.129 \hspace*{1mm}(0.0473) &  \boxed{0.0305} \hspace*{1mm}(0.00963) \\
OU & 	\boxed{1.07} \hspace*{1mm}(0.315) &  2.2 \hspace*{1mm}(1.4) &  \boxed{1.2} \hspace*{1mm}(0.369) &  5.58 \hspace*{1mm}(3.11) &  5.51 \hspace*{1mm}(1.68) &  6.36 \hspace*{1mm}(1.72) \\
fBm 0.2 & \boxed{0.418} \hspace*{1mm}(0.117) &  1.94 \hspace*{1mm}(2.14) &  \boxed{0.539} \hspace*{1mm}(0.163) &  13.8 \hspace*{1mm}(4.2) &  20.8 \hspace*{1mm}(7.18) &  31.2 \hspace*{1mm}(8.71) \\
fBm 0.8 & \boxed{3.11} \hspace*{1mm}(0.995) &  3.52 \hspace*{1mm}(1.19) &  \boxed{3.22} \hspace*{1mm}(1.03) &  12.4 \hspace*{1mm}(7.73) &  4.11 \hspace*{1mm}(1.22) &  4 \hspace*{1mm}(1.21) \\
\multicolumn{7}{c}{} \\ 
\multicolumn{7}{c}{Regression model 2} \\\hline
&	RKHS  &  KPS   &  PVS   &  MH  &   PLS   &   Base \\ \hline
Bm &	\boxed{0.0908} \hspace*{1mm}(0.0366) &  1.13 \hspace*{1mm}(0.658) &  \boxed{0.175} \hspace*{1mm}(0.0804) &  2.44 \hspace*{1mm}(1.06) &  0.757 \hspace*{1mm}(0.271) &  0.736 \hspace*{1mm}(0.239) \\
gBm & \boxed{0.0176} \hspace*{1mm}(0.00845) &  1.11 \hspace*{1mm}(1.01) &  \boxed{0.235} \hspace*{1mm}(0.259) &  2.31 \hspace*{1mm}(1.38) &  0.495 \hspace*{1mm}(0.267) &  0.566 \hspace*{1mm}(0.32) \\
iBm & \boxed{0.000271} \hspace*{1mm}(0.000311) &  0.00134 \hspace*{1mm}(0.000918) &  0.112 \hspace*{1mm}(0.0392) &  0.0519 \hspace*{1mm}(0.0149) &  0.00437 \hspace*{1mm}(0.00173) &  \boxed{0.000215} \hspace*{1mm}(0.0000669) \\
OU &	\boxed{0.0969} \hspace*{1mm}(0.0405) &  1.18 \hspace*{1mm}(0.642) &  \boxed{0.177} \hspace*{1mm}(0.07) &  2.47 \hspace*{1mm}(1.05) &  0.722 \hspace*{1mm}(0.239) &  0.681 \hspace*{1mm}(0.222) \\
fBm 0.2 & \boxed{0.0862} \hspace*{1mm}(0.021) &  0.48 \hspace*{1mm}(3.58) &  \boxed{0.161} \hspace*{1mm}(0.0622) &  6.49 \hspace*{1mm}(4.7) &  6.07 \hspace*{1mm}(1.92) &  7.77 \hspace*{1mm}(2.52) \\
fBm 0.8 & \boxed{0.0985} \hspace*{1mm}(0.0301) &  0.161 \hspace*{1mm}(0.0655) &  0.208 \hspace*{1mm}(0.107) &  0.416 \hspace*{1mm}(0.133) &  0.152 \hspace*{1mm}(0.0505) &  \boxed{0.125} \hspace*{1mm}(0.0387) \\
\end{tabular}
}
\caption{\footnotesize Mean and standard deviation of the RMSE for the response variable for simulated data sets with models 1 and 2 (scale of $10^{-2}$).}
\label{Table_RMSEsim12}
\end{table}

\begin{table}
\centering
\resizebox{\textwidth}{!}{\begin{tabular}{l | l l l l l l }
\multicolumn{7}{c}{Regression model 3} \\\hline
&	RKHS  &  KPS   &  PVS   &  MH  &   PLS   &   Base \\ \hline
Bm &	4.69 \hspace*{1mm}(1.39) & 4.41 \hspace*{1mm}(1.1) &  4.2 \hspace*{1mm}(1.24) &  \boxed{3.95} \hspace*{1mm}(1.02) & \boxed{3.84} \hspace*{1mm}(1.16) & 3.96 \hspace*{1mm}(1.11) \\
gBm &	1.46 \hspace*{1mm}(1.81) &  0.946 \hspace*{1mm}(0.296) &  1.06 \hspace*{1mm}(0.386) &  1.07 \hspace*{1mm}(0.462) &  \boxed{0.824} \hspace*{1mm}(0.252) &  \boxed{0.855} \hspace*{1mm}(0.291) \\
iBm &	0.00372 \hspace*{1mm}(0.00115) &  0.00337 \hspace*{1mm}(0.00105) &  0.0119 \hspace*{1mm}(0.0366) &  0.0416 \hspace*{1mm}(0.0105) & \boxed{0.00295} \hspace*{1mm}(0.000894) & \boxed{0.003} \hspace*{1mm}(0.000876) \\
OU &	4.8 \hspace*{1mm}(1.46) &  4.48 \hspace*{1mm}(1.21) &  4.25 \hspace*{1mm}(1.14) &  \boxed{3.98} \hspace*{1mm}(1.04) &  \boxed{3.88} \hspace*{1mm}(0.984) &  4 \hspace*{1mm}(0.988) \\
fBm 0.2 & 4.24 \hspace*{1mm}(1.08) &  4.29 \hspace*{1mm}(0.997) &  3.99 \hspace*{1mm}(0.864) &  3.84 \hspace*{1mm}(0.923) &  \boxed{3.81} \hspace*{1mm}(1.04) &  \boxed{3.63} \hspace*{1mm}(0.847) \\
fBm 0.8 & 4.98 \hspace*{1mm}(1.15) &  4.45 \hspace*{1mm}(0.945) &  4.22 \hspace*{1mm}(0.939) &  \boxed{3.93} \hspace*{1mm}(0.883) &  \boxed{3.88} \hspace*{1mm}(0.858) &  4.08 \hspace*{1mm}(0.947) \\
\end{tabular}
}
\caption{\footnotesize Mean and standard deviation of the RMSE for the response variable for simulated data sets with model 3 (scale of $10^{-1}$).}
\label{Table_RMSEsim3}
\end{table}

\begin{table}
\centering
\begin{tabular}{l | l l l l l l }
&	RKHS  &  KPS   &  PVS   &  MH  &   PLS   &   Base \\ \hline
Ash content in sugar & 0.354 & \boxed{0.185} & \boxed{0.231} & 0.265 & 0.748 & 0.465 \\
Mediterranean fruit flies & \boxed{0.999} & 1.02 & 1.05 & \boxed{0.944} & 1.03 & 1 \\
\end{tabular}
\caption{\footnotesize RMSE for the response variable for real data sets.}
\label{Table_RMSEreal}
\end{table}

\begin{table}
\centering
\begin{tabular}{l | l l l l}
\multicolumn{5}{c}{Regression model 1} \\\hline
 & RKHS & KPS & PVS & MH \\ \hline
 Bm & $\boxed{0.105} \hspace*{1mm}(0.15)$  & $2.15 \hspace*{1mm}(0.35)$  & $1.67 \hspace*{1mm}(0.539)$  & $2.05 \hspace*{1mm}(0.44)$ \\
 gBm & $\boxed{0.123} \hspace*{1mm}(0.0194)$  & $2.18 \hspace*{1mm}(0.217)$  & $1.65 \hspace*{1mm}(0.494)$  & $2.07 \hspace*{1mm}(0.517)$ \\
 iBm & $0.991 \hspace*{1mm}(0.0514)$  & $2.15 \hspace*{1mm}(0.281)$  & $\boxed{0.903} \hspace*{1mm}(0.619)$  & $5.91 \hspace*{1mm}(0.849)$ \\
 OU & $\boxed{0.097} \hspace*{1mm}(0.162)$  & $2.12 \hspace*{1mm}(0.285)$  & $1.71 \hspace*{1mm}(0.411)$  & $2.10 \hspace*{1mm}(0.623)$ \\
 fBm 0.2 & $\boxed{0} \hspace*{1mm}(0)$  & $2.11 \hspace*{1mm}(0.4)$  & $1.73 \hspace*{1mm}(0.489)$  & $2.01 \hspace*{1mm}(0.362)$ \\
 fBm 0.8 & $\boxed{0.203} \hspace*{1mm}(0.208)$   & $2.14 \hspace*{1mm}(0.277)$  & $1.62 \hspace*{1mm}(0.537)$   & $3.03 \hspace*{1mm}(1.72)$  \\
\multicolumn{5}{c}{} \\ 
\multicolumn{5}{c}{Regression model 2} \\\hline
 & RKHS & KPS & PVS & MH \\ \hline
 Bm & $1.29 \hspace*{1mm}(0.0294)$  & $1.29 \hspace*{1mm}(0.195)$  & $\boxed{0.977} \hspace*{1mm}(0.369)$  & $1.53 \hspace*{1mm}(0.826)$ \\
gBm & $1.24 \hspace*{1mm}(0.167)$  & $1.68 \hspace*{1mm}(1.2)$  & $\boxed{1.18} \hspace*{1mm}(4.48)$  & $1.43 \hspace*{1mm}(1.43)$ \\
iBm & $\boxed{0.672} \hspace*{1mm}(0.316)$  & $1.28 \hspace*{1mm}(0.175)$  & $5.79 \hspace*{1mm}(2.52)$  & $7.22 \hspace*{1mm}(0.0851)$ \\
OU & $\boxed{2.03} \hspace*{1mm}(0.0911)$  & $2.19 \hspace*{1mm}(0.22)$  & $\boxed{2.03} \hspace*{1mm}(0.0611)$  & $2.14 \hspace*{1mm}(0.386)$ \\
fBm 0.2 & $1.3 \hspace*{1mm}(0)$  & $1.28 \hspace*{1mm}(0.24)$  & $\boxed{1.02} \hspace*{1mm}(0.408)$  & $1.38 \hspace*{1mm}(0.428)$ \\
fBm 0.8& $\boxed{1.26} \hspace*{1mm}(0.182)$   & $1.27 \hspace*{1mm}(0.151)$   & $1.42 \hspace*{1mm}(0.834)$   & $2.45 \hspace*{1mm}(0.943)$ \\
\end{tabular}
\caption{\footnotesize Mean and standard deviation of the Hausdorff distance to the actual relevant points (Scale of $10^{-1}$).}
\label{Table_Haus}
\end{table}

\begin{table}
\centering
\begin{tabular}{l | l l l l }
\multicolumn{5}{c}{Regression model 1 ($p^*=3$)} \\\hline
&	RKHS  &  KPS   &  PVS   &  MH \\ \hline
Bm & $\boxed{3.21} \hspace*{1mm}(0.409)$  &  $8.94 \hspace*{1mm}(1.6)$  &  $11.42 \hspace*{1mm}(4.163)$  &  $5.95 \hspace*{1mm}(1.546)$ \\
gBm & $\boxed{3.63} \hspace*{1mm}(0.774)$  &  $9.2 \hspace*{1mm}(1.25)$  &  $10.96 \hspace*{1mm}(4.1)$  &  $6.23 \hspace*{1mm}(1.78)$ \\
iBm & $6.27 \hspace*{1mm}(0.468)$  &  $9.66 \hspace*{1mm}(0.781)$  &  $11.61 \hspace*{1mm}(3.39)$  &  $\boxed{1.15} \hspace*{1mm}(0.359)$ \\
OU & $\boxed{3.22} \hspace*{1mm}(0.426)$  &  $8.64 \hspace*{1mm}(1.77)$  &  $12.13 \hspace*{1mm}(4.23)$  &  $5.84 \hspace*{1mm}(1.61)$ \\
fBm 0.2 & $\boxed{3} \hspace*{1mm}(0)$  &  $8.55 \hspace*{1mm}(1.85)$  &  $11.36 \hspace*{1mm}(4.3)$  &  $5.92 \hspace*{1mm}(1.61)$ \\
fBm 0.8 & $3.32 \hspace*{1mm}(0.529)$   &  $9.21 \hspace*{1mm}(1.31)$  &  $11.1 \hspace*{1mm}(4.29)$  &  $\boxed{2.88} \hspace*{1mm}(1.16)$ \\
\multicolumn{5}{c}{} \\ 
\multicolumn{5}{c}{Regression model 2 ($p^*=5$)} \\\hline
&	RKHS  &  KPS   &  PVS   &  MH  \\ \hline
Bm & $\boxed{4.91} \hspace*{1mm}(0.771)$  &  $9.7 \hspace*{1mm}(0.706)$  &  $8.88 \hspace*{1mm}(2.16)$  &  $5.7 \hspace*{1mm}(1.85)$ \\
gBm & $\boxed{5.5} \hspace*{1mm}(1.22)$  &  $9.46 \hspace*{1mm}(1.16)$  &  $8.77 \hspace*{1mm}(2.5)$  &  $6.02 \hspace*{1mm}(1.73)$ \\
iBm & $\boxed{6.81} \hspace*{1mm}(1.09)$  &  $9.58 \hspace*{1mm}(0.878)$  &  $7.73 \hspace*{1mm}(3.74)$  &  $1.08 \hspace*{1mm}(0.273)$ \\
OU & $\boxed{5.06} \hspace*{1mm}(0.874)$  &  $9.48 \hspace*{1mm}(0.979)$  &  $9.23 \hspace*{1mm}(1.76)$  &  $5.96 \hspace*{1mm}(1.73)$ \\
fBm 0.2 & $\boxed{4} \hspace*{1mm}(0)$  &  $9.33 \hspace*{1mm}(1.49)$  &  $7.28 \hspace*{1mm}(2.02)$  &  $6.61 \hspace*{1mm}(2.22)$ \\
fBm 0.8 & $\boxed{5.23} \hspace*{1mm}(0.617)$ &  $9.66 \hspace*{1mm}(0.901)$ &  $10.61 \hspace*{1mm}(3.14)$ &  $3.53 \hspace*{1mm}(1.11)$ \\
\end{tabular}
\caption{\footnotesize Mean and standard deviation of the number of selected points ($\widehat{p}$).}
\label{Table_P}
\end{table}

On the other hand, regarding with the execution time, we also provide a couple of results. We have measured the execution time of the six methods for the third regression model when the functional data are drawn from the Ornstein-Uhlenbeck process and the fractional Brownian motion with $H=0.8$. The results can be seen in Table \ref{Table_TE}. As we have already mentioned, the RKHS-based method does not require any validation step to determine the number of selected variables. Therefore, the execution time is significantly smaller than that of the other variable selection methods. We can also see that the execution time for the PVS method is much bigger than the others. Note however that this method has in general a good behaviour in terms of prediction error.

\begin{table}
\centering
\resizebox{\textwidth}{!}{
\begin{tabular}{c | l l l l l l }
&	RKHS  &  KPS   &  PVS   &  MH  &   PLS   &   Base \\ \hline
OU & $\boxed{0.224} \hspace*{1mm}(0.051)$ & $3.054 \hspace*{1mm}(9.637)$ & $36.622 \hspace*{1mm}(12.462)$ & $1,247 \hspace*{1mm}(0.055)$ & $0.239 \hspace*{1mm}(0.062)$ & $\boxed{0.228} \hspace*{1mm}(0.051)$ \\
fBm 0.8 & $\boxed{0.244} \hspace*{1mm}(0.060)$  & $2.738 \hspace*{1mm}(0.319)$  & $23.531 \hspace*{1mm}(4.829)$  & $1.299 \hspace*{1mm}(0.083)$  & $0.246 \hspace*{1mm}(0.058)$  & $\boxed{0.235} \hspace*{1mm}(0.055)$  \\
\end{tabular}
}
\caption{\footnotesize Mean and standard deviation of the execution time.}
\label{Table_TE}
\end{table}


\section{Conclusions}

The RKHS approach we have introduced in this paper provides a natural framework for a formal unified theory of variable selection for functional data. The ``sparse'' models (those where the variable selection techniques are fully justified) appear as particular cases in this setup. As a consequence, it is possible to derive asymptotic consistency results as those obtained in the paper. Likewise, it is also possible to consider the problem of estimating the ``true'' number of relevant variables in a consistent way, as we do in Section 5. This is in contrast with other standard proposals for which the number of variables is previously fixed as an input, or it is determined using cross validation and other computationally expensive methods. Then, our proposal is more firmly founded in theory and, at the same time, provides a  much faster method in practice, which is important when dealing with large data sets. 

The empirical results we have obtained are encouraging. In short, according to our experiments, the RKHS-based method works better than other variable selection methods is those sparse models that fulfil the ideal theoretical conditions we need. In the non sparse model considered in the simulations, the RKHS method is slightly outperformed by other proposals (but still behaves reasonably). Finally, in the ``neutral'' field of real data examples the performance looks also satisfactory and competitive.

Last but not least, from a general, methodological point of view, this paper represents an additional example of the surprising usefulness of reproducing kernels in statistics. Additional examples can be found in \cite{berlinet2004},  \cite{hsing2015theoretical}, \cite{berrendero2017} or \cite{kad16}.


\section{Some additional proofs}\label{Sec:Proofs}

\subsection{Proposition \ref{Prop:IterativeQ0Matrices}}\label{Sec:ProofIterativeQ0Matrices}

We have to rewrite the expression $c_{T_{p+1}}'\Sigma_{T_{p+1}}^{-1}c_{T_{p+1}}$, where $p\geq 1$. We can write the matrix $\Sigma_{T_{p+1}}$ in block form as
\begingroup
\renewcommand*{\arraystretch}{1.5}
\begin{eqnarray*}
\Sigma_{T_{p+1}} &=& \left(
\begin{array}{c c c | c}
  & & & \mbox{cov}(X(t_1),X(t_{p+1})) \\
  &\Sigma_{T_p}& &\vdots \\
  & & & \mbox{cov}(X(t_p),X(t_{p+1})) \\ \hline
  \mbox{cov}(X(t_1),X(t_{p+1})) & \ldots & \mbox{cov}(X(t_p),X(t_{p+1})) & \mbox{cov}(X(t_{p+1}),X(t_{p+1})) 
\end{array}
\right) \\[1em]
&\equiv& \left( \begin{array}{c | c}
\Sigma_{T_p} & c_{T_p,p+1} \\ \hline
c_{T_p,p+1}' & \sigma_{p+1}^2
\end{array} \right).
\end{eqnarray*}
\endgroup
Then its inverse matrix is
\begingroup
\renewcommand*{\arraystretch}{1.5}
\begin{eqnarray*}
\Sigma_{T_{p+1}}^{-1} &=& \left( \begin{array}{c | c}
\Sigma_{T_p}^{-1} + \frac{1}{a}\Sigma_{T_p}^{-1}c_{T_p,p+1}c_{T_p,p+1}'\Sigma_{T_p}^{-1} & -\frac{1}{a}\Sigma_{T_p}^{-1}c_{T_p,p+1}\\[1mm] \hline
-\frac{1}{a}c_{T_p,p+1}'\Sigma_{T_p}^{-1} & \frac{1}{a}
\end{array} \right),
\end{eqnarray*}
\endgroup
where $a = \sigma_{p+1}^2 - c_{T_p,p+1}'\Sigma_{T_p}^{-1}c_{T_p,p+1}.$ We can also write the vector of covariances as
$$c_{T_{p+1}}' \hspace*{2mm}=\hspace*{2mm} (\mbox{cov}(X(t_1),Y), \ldots, \mbox{cov}(X(t_{p+1}),Y)) \hspace*{2mm}=\hspace*{2mm} (c_{T_p} \hspace*{1mm}\vert\hspace*{1mm} c_{p+1}).$$
Using this notation we can rewrite the original expression as follows,
\begin{eqnarray*}\label{Eq:FracInv}
c_{T_{p+1}}' \Sigma_{T_{p+1}}^{-1} c_{T_{p+1}}  
&=& c_{T_p}' \Sigma_{T_p}^{-1}c_{T_p} + \frac{1}{a} c_{T_p}'\Sigma_{T_p}^{-1}c_{T_p,p+1}c_{T_p,p+1}'\Sigma_{T_p}^{-1} c_{T_p} - \frac{c_{p+1}}{a}c_{T_p,p+1}' \Sigma_{T_p}^{-1}c_{T_p}  \\&&
- \frac{c_{p+1}}{a} c_{T_p}'\Sigma_{T_p}^{-1}c_{T_p,p+1} + \frac{c_{p+1}^2}{a} \\[1em]
&=& c_{T_p}' \Sigma_{T_p}^{-1}c_{T_p} + \frac{1}{a}\left[ \left( c_{T_p}'\Sigma_{T_p}^{-1}c_{T_p,p+1} \right)^2 - 2 c_{p+1} c_{T_p}'\Sigma_{T_p}^{-1}c_{T_p,p+1}  + c_{p+1}^2 \right] \\[1em]
&=& c_{T_p}' \Sigma_{T_p}^{-1}c_{T_p} + \frac{\left( c_{T_p}'\Sigma_{T_p}^{-1}c_{T_p,p+1} - c_{p+1} \right)^2}{\sigma_{p+1}^2 - c_{T_p,p+1}'\Sigma_{T_p}^{-1}c_{T_p,p+1}},
\end{eqnarray*}
since the product $c_{T_p}'\Sigma_{T_p}^{-1}c_{T_p,p+1}$ is actually a real number.

\subsection{Proposition \ref{Prop:IterativeQ0Vars}}\label{Sec:ProofIterativeQ0Vars}

Using the notation of the statement, if $\widetilde{X}(t)$ is the centred process, we can rewrite the numerator of the quotient of Equation \eqref{Eq:IterativeQ0Matrices} as
\begin{eqnarray*}
\mbox{cov}\big( Y - Y_{T_p}, X(t_{p+1}) \big) &=&  c_{p+1} - \mbox{cov}\big(Y_{T_p}, X(t_{p+1}) \big) \\[1em]
&=& c_{p+1} - c_{T_p}'\Sigma_{T_p}^{-1} \mbox{cov}\big( \widetilde{X}(T_p), X(t_{p+1})\big) \\[1em] 
&=& c_{p+1} - c_{T_p}'\Sigma_{T_p}^{-1}c_{T_p,p+1},
\end{eqnarray*}
since the covariances are not affected by the centring. We can also write $\mbox{cov}\big( Y - Y_{T_p}, X(t_{p+1}) - X_{T_p}(t_{p+1}) \big)$ in the same way, since $(Y-Y_{T_p}) \perp X_{T_p}(t_{p+1})$. For the denominator,
\begin{eqnarray*}
&&\hspace*{-1cm}\mbox{var}\big( X(t_{p+1}) - X_{T_p}(t_{p+1})\big) \\[1em]
&=& \mbox{var}\big( X(t_{p+1}) \big) +  \mbox{var}\big(X_{T_p}(t_{p+1})\big) - 2 \mbox{cov}(X(t_{p+1}), X_{T_p}(t_{p+1})) \\[1em]
&=& \sigma_{p+1}^2 +  c_{T_p,p+1}'\Sigma_T^{-1}c_{T_p,p+1} - 2c_{T_p,p+1}'\Sigma_T^{-1}\mbox{cov}\left( X(t_{p+1}), \widetilde{X}(T_p) \right)\\[1em]
&=& \sigma_{p+1}^2 -  c_{T_p,p+1}'\Sigma_T^{-1}c_{T_p,p+1}.
\end{eqnarray*}	
From these two expressions the proposition follows straightforwardly.

\section*{Acknowledgements}
This work has been partially supported by Spanish Grant MTM2016-78751-P 
and the European Social Fund (Ayudas para contratos predoctorales para la formaci\'on de doctores 2015, Ministerio de Econom\'ia y Competitividad, Spain).


\bibliography{Refs}{}

\end{document}